\newtheorem{proposition}{Proposition}
\newtheorem{lemma}[proposition]{Lemma}
\newtheorem{definition}[proposition]{Definition}
\newtheorem{theorem}[proposition]{Theorem}
\def\squareforqed{\hbox{\rlap{$\sqcap$}$\sqcup$}}
\def\qed{\ifmmode\squareforqed\else{\unskip\nobreak\hfil
\penalty50\hskip1em\null\nobreak\hfil\squareforqed
\parfillskip=0pt\finalhyphendemerits=0\endgraf}\fi}
\def\endenv{\ifmmode\;\else{\unskip\nobreak\hfil
\penalty50\hskip1em\null\nobreak\hfil\;
\parfillskip=0pt\finalhyphendemerits=0\endgraf}\fi}
\newenvironment{proof}{\noindent \textbf{{Proof~} }}{\hfill $\blacksquare$}
\newcounter{remark}
\newenvironment{remark}[1][]{\refstepcounter{remark}\par\medskip\noindent%
\textbf{Remark~\theremark #1} }{\medskip}
\newcounter{example}
\mathchardef\ordinarycolon\mathcode`\:
\def\vcentcolon{\mathrel{\mathop\ordinarycolon}}
\newmdenv[skipabove=7pt,
skipbelow=7pt,
backgroundcolor=darkblue!15,
innerleftmargin=5pt,
innerrightmargin=5pt,
innertopmargin=5pt,
leftmargin=0cm,
rightmargin=0cm,
innerbottommargin=5pt,
linewidth=1pt]{tBox}
\newmdenv[skipabove=7pt,
skipbelow=7pt,
backgroundcolor=blue2!25,
innerleftmargin=5pt,
innerrightmargin=5pt,
innertopmargin=5pt,
leftmargin=0cm,
rightmargin=0cm,
innerbottommargin=5pt,
linewidth=1pt]{dBox}
\newmdenv[skipabove=7pt,
skipbelow=7pt,
backgroundcolor=darkred!15,
innerleftmargin=5pt,
innerrightmargin=5pt,
innertopmargin=5pt,
leftmargin=0cm,
rightmargin=0cm,
innerbottommargin=5pt,
linewidth=1pt]{rBox}
\newmdenv[skipabove=7pt,
skipbelow=7pt,
backgroundcolor=darkkblue!15,
innerleftmargin=5pt,
innerrightmargin=5pt,
innertopmargin=5pt,
leftmargin=0cm,
rightmargin=0cm,
innerbottommargin=5pt,
linewidth=1pt]{sBox}
\definecolor{darkblue}{RGB}{0,76,156}
\definecolor{darkkblue}{RGB}{0,0,153}
\definecolor{blue2}{RGB}{102,178,255}
\definecolor{darkred}{RGB}{195,0,0}
\newcommand{\nc}{\newcommand}
\nc{\rnc}{\renewcommand}
\nc{\lbar}[1]{\overline{#1}}
\nc{\bra}[1]{\langle#1|}
\nc{\ket}[1]{|#1\rangle}
\nc{\ketbra}[2]{|#1\rangle\!\langle#2|}
\nc{\braket}[2]{\langle#1|#2\rangle}
\nc{\proj}[1]{| #1\rangle\!\langle #1 |}
\nc{\avg}[1]{\langle#1\rangle}
\nc{\rank}{\operatorname{Rank}}
\nc{\smfrac}[2]{\mbox{$\frac{#1}{#2}$}}
\nc{\tr}{\operatorname{Tr}}
\nc{\ox}{\otimes}
\nc{\dg}{\dagger}
\nc{\dn}{\downarrow}
\nc{\cA}{{\cal A}}
\nc{\cB}{{\cal B}}
\nc{\cC}{{\cal C}}
\nc{\cD}{{\cal D}}
\nc{\cE}{{\cal E}}
\nc{\cF}{{\cal F}}
\nc{\cG}{{\cal G}}
\nc{\cH}{{\cal H}}
\nc{\cI}{{\cal I}}
\nc{\cJ}{{\cal J}}
\nc{\cK}{{\cal K}}
\nc{\cL}{{\cal L}}
\nc{\cM}{{\cal M}}
\nc{\cN}{{\cal N}}
\nc{\cO}{{\cal O}}
\nc{\cP}{{\cal P}}
\nc{\cQ}{{\cal Q}}
\nc{\cR}{{\cal R}}
\nc{\cS}{{\cal S}}
\nc{\cT}{{\cal T}}
\nc{\cU}{{\cal U}}
\nc{\cV}{{\cal V}}
\nc{\cX}{{\cal X}}
\nc{\cY}{{\cal Y}}
\nc{\cZ}{{\cal Z}}
\nc{\cW}{{\cal W}}
\nc{\csupp}{{\operatorname{csupp}}}
\nc{\qsupp}{{\operatorname{qsupp}}}
\nc{\var}{{\operatorname{var}}}
\nc{\rar}{\rightarrow}
\nc{\lrar}{\longrightarrow}
\nc{\polylog}{{\operatorname{polylog}}}
\nc{\wt}{{\operatorname{wt}}}
\nc{\av}[1]{{\left\langle {#1} \right\rangle}}
\nc{\supp}{{\operatorname{supp}}}
\nc{\argmin}{{\operatorname{argmin}}}
\def\a{\alpha}
\def\i{\mathbf{i}}
\def\x{\xi}
\nc{\BS}{{{\mathbb S}}}
\nc{\RR}{{{\mathbb R}}}
\nc{\CC}{{{\mathbb C}}}
\nc{\FF}{{{\mathbb F}}}
\nc{\NN}{{{\mathbb N}}}
\nc{\ZZ}{{{\mathbb Z}}}
\nc{\PP}{{{\mathbb P}}}
\nc{\QQ}{{{\mathbb Q}}}
\nc{\UU}{{{\mathbb U}}}
\nc{\EE}{{{\mathbb E}}}
\nc{\id}{{\operatorname{id}}}
\nc{\CHSH}{{\operatorname{CHSH}}}
\nc{\rU}{\mbox{U}}
\nc{\ob}[1]{#1}
\nc{\SEP}{{\text{\rm SEP}}}
\nc{\NS}{{\text{\rm NS}}}
\nc{\LOCC}{{\text{\rm LOCC}}}
\nc{\PPT}{{\text{\rm PPT}}}
\nc{\EXT}{{\text{\rm EXT}}}
\nc{\Sym}{{\operatorname{Sym}}}
\nc{\ERLO}{{E_{\text{r,LO}}}}
\nc{\ERLOCC}{{E_{\text{r,LOCC}}}}
\nc{\ERPPT}{{E_{\text{r,PPT}}}}
\nc{\ERLOCCinfty}{{E^{\infty}_{\text{r,LOCC}}}}
\nc{\Aram}{{\operatorname{\sf A}}}
\newcommand{\Choi}{Choi-Jamio\l{}kowski }
\def\grd@save@target#1{%
  \def\grd@target{#1}}
\def\grd@save@start#1{%
  \def\grd@start{#1}}
\tikzset{
  grid with coordinates/.style={
    to path={%
      \pgfextra{%
        \edef\grd@@target{(\tikztotarget)}%
        \tikz@scan@one@point\grd@save@target\grd@@target\relax
        \edef\grd@@start{(\tikztostart)}%
        \tikz@scan@one@point\grd@save@start\grd@@start\relax
        \draw[minor help lines,magenta] (\tikztostart) grid (\tikztotarget);
        \draw[major help lines] (\tikztostart) grid (\tikztotarget);
        \grd@start
        \pgfmathsetmacro{\grd@xa}{\the\pgf@x/1cm}
        \pgfmathsetmacro{\grd@ya}{\the\pgf@y/1cm}
        \grd@target
        \pgfmathsetmacro{\grd@xb}{\the\pgf@x/1cm}
        \pgfmathsetmacro{\grd@yb}{\the\pgf@y/1cm}
        \pgfmathsetmacro{\grd@xc}{\grd@xa + \pgfkeysvalueof{/tikz/grid with coordinates/major step}}
        \pgfmathsetmacro{\grd@yc}{\grd@ya + \pgfkeysvalueof{/tikz/grid with coordinates/major step}}
        \foreach \x in {\grd@xa,\grd@xc,...,\grd@xb}
        \node[anchor=north] at (\x,\grd@ya) {\pgfmathprintnumber{\x}};
        \foreach \y in {\grd@ya,\grd@yc,...,\grd@yb}
        \node[anchor=east] at (\grd@xa,\y) {\pgfmathprintnumber{\y}};
      }
    }
  },
  minor help lines/.style={
    help lines,
    step=\pgfkeysvalueof{/tikz/grid with coordinates/minor step}
  },
  major help lines/.style={
    help lines,
    line width=\pgfkeysvalueof{/tikz/grid with coordinates/major line width},
    step=\pgfkeysvalueof{/tikz/grid with coordinates/major step}
  },
  grid with coordinates/.cd,
  minor step/.initial=.2,
  major step/.initial=1,
  major line width/.initial=2pt,
}
\def\problem@s{}
\newcounter{problems@cnt}
\newcommand{\allproblems}{\problem@s}
\definecolor{beamer}{rgb}{0.2,0.2,0.7}
\definecolor{colorone}{rgb}{1,0.36,0.03}
\definecolor{colortwo}{rgb}{0.4,0.77,0.17}
\definecolor{colorthree}{rgb}{0.01,0.51,0.93}
\definecolor{colorfour}{rgb}{0.47,0.26,0.58}
\definecolor{colorfive}{rgb}{0.12,0.55,0.16}
\nc{\st}{\text{subject to} \ }
\nc{\supre}{\text{supremum} \ }
\nc{\sdp}{\text{sdp}}
\newcommand{\sgn}{{\rm sgn}}
\newcommand{\tc}{{\rm TC}}
\newcommand{\idop}{\mathbbm{1}} 
\newcommand{\lrp}[1]{\left( #1 \right)}
\newcommand{\lrb}[1]{\left[ #1 \right]}
\newcommand{\lrc}[1]{\left\{ #1 \right\}}
\newcommand{\lrV}[1]{\left\| #1 \right\|}
\begin{document}
\title{Power of quantum measurement in simulating unphysical operations}

\author{Xuanqiang Zhao}
\affiliation{QICI Quantum Information and Computation Initiative, Department of Computer Science, The University of Hong Kong, Pokfulam Road, Hong Kong}

\author{Lei Zhang}
\affiliation{Thrust of Artificial Intelligence, Information Hub,\\ The Hong Kong University of Science and Technology (Guangzhou), Guangdong 511453, China}

\author{Benchi Zhao}
\affiliation{Thrust of Artificial Intelligence, Information Hub,\\ The Hong Kong University of Science and Technology (Guangzhou), Guangdong 511453, China}

\author{Xin Wang}
\email{felixxinwang@hkust-gz.edu.cn}
\affiliation{Thrust of Artificial Intelligence, Information Hub,\\ The Hong Kong University of Science and Technology (Guangzhou), Guangdong 511453, China}

\begin{abstract}
The manipulation of quantum states through linear maps beyond quantum operations has many important applications in various areas of quantum information processing.
Current methods simulate unphysical maps by sampling physical operations according to classically determined probability distributions.
In this work, we show that using quantum measurement instead leads to lower simulation costs for general Hermitian-preserving maps.
Remarkably, we establish the equality between the simulation cost and the well-known diamond norm, thus closing a previously known gap and assigning diamond norm a universal operational meaning for all Hermitian-preserving maps.
We demonstrate our method in two applications closely related to error mitigation and quantum machine learning, where it exhibits a favorable scaling.
These findings highlight the power of quantum measurement in simulating unphysical operations, in which quantum interference is believed to play a vital role. Our work paves the way for more efficient sampling techniques and has the potential to be extended to more quantum information processing scenarios.
\end{abstract}

\date{\today}
\maketitle

\section{Introduction}
Quantum measurement is the key operation that gives rise to the probabilistic nature of quantum mechanics. It can be viewed as a quantum way of sampling outcomes from a probability distribution governed by the Born rule.
The task of quantum random sampling~\cite{lund2017quantum, hangleiter2023computational} was conceived to demonstrate a computational advantage of quantum computers based on the observation that quantum measurement following certain quantum computations is difficult to simulate classically~\cite{terhal2002adaptive}.
Due to its relative simplicity, quantum random sampling has led to experimental demonstration of quantum advantage on noisy intermediate-scale quantum devices available nowadays~\cite{arute2019quantum, zhong2020quantum, wu2021strong, madsen2022quantum}. However, the task itself is proposed for the sole purpose of showing quantum advantage and therefore lacks practical meaning.
On real-world sampling problems, the benefits of quantum measurement are yet to be discovered.

A practical task of broad interest where sampling naturally arises is the simulation of unphysical linear maps, which is an essential subroutine in many quantum information processing tasks.
Though quantum operations are limited to completely positive and trace-preserving (CPTP), or more generally, completely positive and trace-non-increasing (CPTN) maps due to the physicality requirement that they map quantum states to quantum states, many operations beyond CPTN maps are of great importance from both theoretical and practical perspectives.
For example, positive but not completely positive maps, such as partial transposition~\cite{peres1996separability, HORODECKI19961}, are widely used to characterize and detect entanglement in quantum states~\cite{guhne2009entanglement}. Maps that are even non-positive are encountered in the mitigation of errors on quantum devices~\cite{Temme2017, Jiang2020, rossini2023single}. These crucial applications motivate the research of realizing unphysical operations, primarily Hermitian-preserving maps.

Among a few different paths to realizing Hermitian-preserving maps~\cite{horodecki2002method, Regula2021a, Jiang2020, wei2023realizing}, quasi-probability decomposition (QPD) has been a popular method due to its favorable memory requirement and easy implementability.
The idea of QPD is to decompose an unphysical operation $\cE$ into a linear combination of physical operations. Through sampling physical operations and classically post-processing measurement outcomes, this decomposition allows the computation of the expectation value of any observable with respect to any state transformed by a series of physical or unphysical operations with $\cE$ being an intermediate step.

While QPD has enjoyed great success in a variety of tasks~\cite{Temme2017, endo2018practical, takagi2021optimal, piveteau2022quasiprobability, van2023probabilistic, wang2022detecting, yuan2023virtual}, it is still conservative for that it samples physical operations according to a fixed probability distribution generated classically.
Quantum mechanics allows a more general way of generating a probability distribution and a tool to sample from it, that is, by quantum measurement.
Thus, it is then natural to ask: does quantum measurement bring any advantage over QPD in simulating maps beyond quantum operations?

In this work, we give an affirmative answer to this question by utilizing a quantum instrument to sample physical operations. It has been shown that every Hermitian-preserving map can be statistically decomposed to a quantum instrument with appropriate post-processing~\cite{buscemi2013direct}. Here, we rigorously study the simulation cost and the optimal protocol for simulating any Hermitian-preserving map using quantum instrument.
We show that one quantum instrument is all you need to achieve an optimal simulation cost, which is significantly lower than the one attained by QPD in some cases. Meanwhile, we prove that the simulation cost of any Hermitian-preserving map is equal to the map's diamond norm, generalizing the result in Ref.~\cite{Regula2021a} and endowing diamond norm with a universal operational meaning in the simulation of Hermitian-preserving maps. To demonstrate the advantage of quantum instruments, we consider applications in retrieving faithful information from noisy states and extracting entries from a state's density matrix, which are tasks closely related to error mitigation and quantum machine learning.

\begin{figure*}
    \centering
    \includegraphics[width=0.87\textwidth]{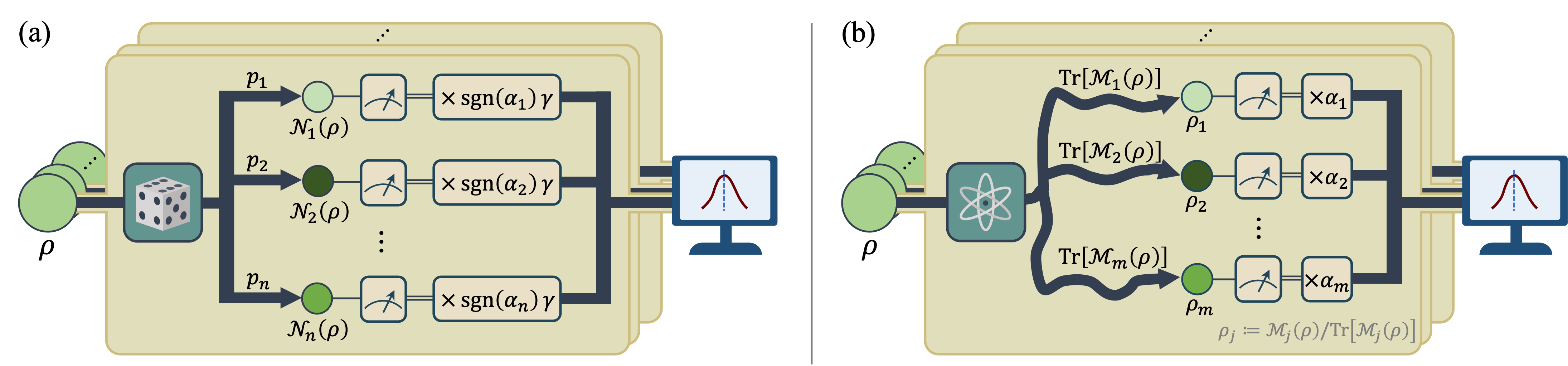}
    \caption{Difference between QPD and measurement-controlled post-processing. The task is to estimate the expectation value of an observable with respect to a state transformed by a Hermitian-preserving map without knowing the observable nor the state.
    (a) Estimating the expectation value with QPD. The action on the input state in each round is determined by classical random sampling, which is independent of the input state. (b) Estimating the expectation value with measurement-controlled post-processing. The action on the input state in each round is determined by quantum measurement governed by the Born rule, which takes the input state into consideration.}
    \label{fig:model_comparison}
\end{figure*}

\section{Measurement-controlled post-processing}
A Hermitian-preserving map, which maps Hermitian operators to Hermitian operators, can be written as a linear combination of CPTN maps. QPD simulates a Hermitian-preserving map using such a decomposition. For example, consider a decomposition of a Hermitian-preserving map $\cE = \sum_j \a_j\cN_j$, where each $\a_j$ is a real number and $\cN_j$ is a CPTN map.
To simplify the analysis, we assume that only the operation $\cE$ is applied. However, the analysis can be easily extended to cases in which $\cE$ is just one step in a sequence of operations as in, for instance, error mitigation, where noise needs to be canceled after each intermediate operation.
To begin with, QPD samples a CPTN map from $\lrc{\cN_j}$ with a probability distribution $p_j \coloneqq p(\cN_j) \coloneqq |\a_j| / \sum_j |\a_j|$.
Then, the sampled operation is applied to the input state $\rho$~\footnote{
As a sampled operation $\cN$ could be a CPTN map, it acting on the input state $\rho$ should give out a normalized state $\cN(\rho) / \tr[\cN(\rho)]$. However, a CPTN map is implemented by measurement and post-selection, and $\tr[\cN(\rho)]$ is exactly the success probability of implementing $\cN$. This probability cancels out the normalization factor in the output state. For this reason, when analyzing QPD, we can ignore the normalization factor $\tr[\cN(\rho)]$ as in Fig.~\ref{fig:model_comparison}(a).
} followed by a one-shot measurement with a given observable $O$.
When the sampled operation is $\cN_j$, a coefficient $\sgn(\a_j)\gamma$ will be multiplied to the outcome from the observable measurement, where $\gamma \coloneqq \sum_j |\a_j|$ and $\sgn$ denotes the sign function.
The post-processing coefficient ensures that the final output has an expected value $\sum_j p_j \sgn(\a_j)\gamma \tr[\cN_j(\rho) O] = \tr\lrb{\cE(\rho) O}$. After repeating the sampling, measurement, and post-processing for multiple rounds, we can get a fairly accurate estimation of $\tr\lrb{\cE(\rho) O}$ by taking the average of the outputs from all rounds. The whole process is shown in Fig.~\ref{fig:model_comparison}(a).

Compared with physical operations, unphysical maps could lead to a higher sampling overhead in terms of the number of one-shot measurements, or equivalently, the number of input state copies required to achieve the desired accuracy.
While the exact number of shots required could be hard to compute, Hoeffding's inequality offers a convenient upper bound, which has been used to characterize the cost of simulating unphysical maps in previous works~\cite{Jiang2020, Regula2021a}.
As we show in Appendix~\ref{app:overhead}, by Hoeffding's inequality~\cite{hoeffding1994probability}, the number of one-shot measurements $M$ satisfying
\begin{align}\label{eq:samp_over_qpd}
    M \geq \frac{2\gamma^2 \|O\|_\infty^2 \log\frac{2}{\delta}}{\varepsilon^2} = \gamma^2 K(\delta, \varepsilon, O)
\end{align}
are sufficient to ensure that the final estimation of $\tr\lrb{\cE(\rho) O}$ is within an error $\varepsilon$ with a probability no less than $1-\delta$, where $K(\delta, \varepsilon, O) \coloneqq 2\|O\|_\infty^2 \log\frac{2}{\delta} / \varepsilon^2$ is independent of the post-processing. The factor $\gamma = \sum_j |\a_j|$, which is the largest magnitude of post-processing coefficients, characterizes the sampling overhead of simulating $\cE$ with the decomposition $\cE = \sum_j \a_j\cN_j$ using QPD.

In the above model of QPD, which physical operation will be enacted in each round is resolved by a pre-determined classical probability distribution. As we are dealing with quantum information, there is no sense to limit the control system to a classical one.
What we need is an operation that takes in an input state and outputs a state for the succeeding observable measurement and a classical value controlling the post-processing.
The most general form of such a physical operation is known as a quantum instrument.

A quantum instrument is a quantum operation that gives both classical and quantum outputs. Mathematically, it is described by a collection $\lrc{\cM_j}$, where each $\cM_j$ is a CPTN map and $\sum_j \cM_j$ is CPTP~\cite{khatri2020principles}. Given an initial state $\rho$, the quantum instrument outputs a measurement outcome $j$ and a corresponding post-measurement state $\rho_j \coloneqq \cM_j(\rho) / \tr\lrb{\cM_j(\rho)}$ with a probability $\tr\lrb{\cM_j(\rho)}$ dependent on the input state.
When the measurement outcome from the quantum instrument is $j$, we multiply a coefficient $\a_j$ to the value obtained from the one-shot observable measurement on the state $\rho_j$. Then, the expected value of the output after classical post-processing is $\sum_j \tr\lrb{\cM_j(\rho)} \alpha_j \tr\lrb{\rho_j O} = \tr\lrb{\sum_j \alpha_j \cM_j(\rho) O}$. As in QPD, we repeat these steps for multiple rounds to obtain an estimation of $\tr\lrb{\sum_j \alpha_j \cM_j(\rho) O}$. We call this whole process measurement-controlled post-processing, which is visualized in Fig.~\ref{fig:model_comparison}(b).

In the example presented above, the map $\sum_j \alpha_j \cM_j$ is the effective operation performed on the input state.
While this decomposition looks a lot like the decomposition in QPD, the sampling overhead associated with it is $\a_{\max} \coloneqq \max_j |\a_j|$ instead of $\sum_j |\a_j|$.
This is because here we directly use the coefficients $\{\a_j\}$ in post-processing, and thus the largest magnitude among all the post-processing coefficients is $\a_{\max}$ instead of $\sum_j |\a_j|$ in QPD.
Specifically, it can be directly verified by Hoeffding's inequality that
\begin{align}\label{eq:samp_over_tc}
    M \geq \a_{\max}^2 K(\delta, \varepsilon, O)
\end{align}
one-shot measurements are required to make sure that the prediction has an error smaller than $\varepsilon$ with a probability no less than $1-\delta$.

The major distinction between QPD and measurement-controlled post-processing that leads to the difference in the expressions for sampling overheads is that QPD encodes a fixed probability distribution in the decomposition coefficients, whereas the latter does not impose such an artificial probability distribution on the measurement outcomes.
Instead, the probability distribution $\lrc{\tr\lrb{\cM_j(\rho)}}$ arises naturally from the the Born rule for the quantum measurement embedded in the CPTN maps constituting the quantum instrument, and the distribution varies from state to state.

\section{Twisted channel for mathematical characterization}
To assist the analysis of measurement-controlled post-processing, we introduce its mathematical characterization called a twisted channel.
The effective operation $\sum_j \alpha_j \cM_j$ can be written as $\alpha_{\max} \sum_j \a'_j \cM_j$, where the normalized coefficient $\a'_j \coloneqq \a_j / \alpha_{\max}$ satisfies $|\a'_j| \leq 1$.
Then, the combination $\sum_j \a'_j \cM_j$ has a unit sampling overhead and $\sum_j |\a'_j| \cM_j$ is a CPTN map. Without loss of generality, we can require $\sum_j |\a'_j| \cM_j$ to be CPTP. This is because there exists a map $\cM'$ such that $\sum_j |\a'_j| \cM_j + 2\cM'$ is CPTP. Then, we can add two terms $\cM'$ and $-\cM'$ into the combination $\sum_j \a'_j \cM_j$ without changing the map this combination represents for nor its sampling overhead.
Absorbing $|\a'_j|$ into each $\cM_j$, the effective operation can be written as $\a_{\max} \cT$, where $\cT$, though not necessarily physical, can be simulated with unit sampling overhead, and we call it a twisted channel~\footnote{
A set of operations similar to twisted channels has been noted in Ref.~\cite{piveteau2023circuit} inspired by Refs.~\cite{Mitarai2021, Mitarai2021a} for the task of circuit knitting. This set is introduced as a generalization of the set of CPTN maps to be sampled under the conventional QPD framework. In this work, we are taking an inherently different approach by involving quantum measurement in the process of generating probability distributions and sampling from them, where a twisted channel naturally arises.
}.

\begin{definition}[Twisted channel]\label{def:twisted_channel}
    A twisted channel $\cT$ is a linear map that can be written as $\cT = \sum_j s_j \cM_j$, where $s_j \in \lrc{+1, -1}$ and $\lrc{\cM_j}$ is a quantum instrument.
\end{definition}

It is clear that the effective operation of measurement-controlled post-processing is simply a single twisted channel scaled by a coefficient that coincides with the sampling overhead. As measurement-controlled post-processing is more general than the classically controlled one, one twisted channel with a suitable scalar is enough for simulating any Hermitian-preserving map, and we formally prove this statement in Appendix~\ref{app:one_instrument}. On the other hand, one may decompose a Hermitian-preserving map into a linear combination of multiple twisted channels as in QPD, which corresponds to sampling different quantum instruments at different measurement shots. It turns out that involving more quantum instruments does not result in a lower overhead.

\begin{theorem}[One quantum instrument is all you need]\label{theorem:one_quantum_instrument}
Under measurement-controlled post-processing, any protocol that involves the sampling of multiple quantum instruments is equivalent to a protocol using a single quantum instrument in terms of the simulated map and the sampling overhead.
\end{theorem}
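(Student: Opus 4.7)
My plan is to show by direct construction that any multi-instrument protocol under measurement-controlled post-processing can be merged into a single-instrument protocol realizing the same effective map with the same sampling overhead; the converse direction is trivial since a single-instrument protocol is a special case of a multi-instrument one.

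First I would fix notation for a generic multi-instrument protocol: in each round, classically sample an instrument index $k\in\lrc{1,\ldots,L}$ with probability $p_k$, apply the quantum instrument $\lrc{\cM_j^{(k)}}_j$ to the input state $\rho$, and upon observing classical outcome $j$ multiply the succeeding observable estimate by a post-processing coefficient $c_{k,j}$. Unwinding the estimator shows that its expected value equals $\tr\lrb{\cE(\rho) O}$ with effective map
\begin{align}
\cE \;=\; \sum_{k,j} p_k\, c_{k,j}\, \cM_j^{(k)},
\end{align}
while the same Hoeffding-style bound used earlier in the paper identifies the sampling overhead with $c_{\max} \coloneqq \max_{k,j} |c_{k,j}|$, since this is the largest absolute value attained by the random variable that is averaged over rounds.

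Next, I would define the merged quantum instrument $\lrc{\widetilde{\cM}_{(k,j)}}$ indexed by outcome pairs via $\widetilde{\cM}_{(k,j)} \coloneqq p_k\, \cM_j^{(k)}$, with post-processing coefficient $c_{k,j}$ assigned to the outcome $(k,j)$. The validity check is immediate: each $\widetilde{\cM}_{(k,j)}$ is CPTN, and $\sum_{k,j} \widetilde{\cM}_{(k,j)} = \sum_k p_k \sum_j \cM_j^{(k)}$ is a convex combination of CPTP maps and therefore CPTP. Substituting into the measurement-controlled post-processing formula recovers exactly the same effective map $\sum_{k,j} c_{k,j}\, \widetilde{\cM}_{(k,j)} = \cE$ and the same sampling overhead $\max_{(k,j)} |c_{k,j}| = c_{\max}$, which proves the theorem.

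The construction is largely bookkeeping, so the main conceptual content is the observation that under measurement-controlled post-processing the sampling overhead is governed solely by the magnitudes of the post-processing coefficients, with no additional normalization factor appearing from the Born probabilities $\tr\lrb{\widetilde{\cM}_{(k,j)}(\rho)}$. This is precisely what lets the classical weights $p_k$ be absorbed into the CPTN components of a single larger instrument without inflating the cost---a marked departure from the QPD situation, where classical sampling probabilities must be compensated for in the post-processing coefficients and therefore directly enlarge the overhead. The only genuine thing to verify beyond arithmetic is that the absorbed family remains a bona fide quantum instrument, which, as noted, follows from closure of CPTP maps under convex combinations.
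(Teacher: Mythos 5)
Your construction is correct and rests on the same underlying observation as the paper's proof --- that the classical sampling weights $p_k$ can be absorbed into the CPTN elements of one larger instrument because convex combinations of CPTP maps are CPTP --- but you package it differently. The paper phrases the multi-instrument protocol as a QPD-style linear combination of twisted channels $\sum_k \a_k \cT_k$ with overhead $\sum_k|\a_k|$, splits each $\cT_k$ into its positive and negative CPTN parts $\cT_k^{\pm}$, and shows $\cW^{\pm}\coloneqq\sum_k p_k\cT_k^{\pm}$ again form a twisted channel (Lemma~\ref{lemma:one_twisted_channel}); your merged elements $\widetilde{\cM}_{(k,j)}=p_k\cM_j^{(k)}$ are exactly the constituents of those $\cW^{\pm}$ before grouping by sign. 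What your version buys is (i) a slightly more general protocol class, since you allow arbitrary per-outcome coefficients $c_{k,j}$ rather than the fixed magnitudes $\sgn(\a_k)\gamma s_j^{(k)}$ implicit in the paper's twisted-channel decomposition, and (ii) a sharper conclusion: because $\tr\lrb{\widetilde{\cM}_{(k,j)}(\rho)}=p_k\tr\lrb{\cM_j^{(k)}(\rho)}$ and the normalized post-measurement states coincide, the merged protocol reproduces the \emph{entire outcome distribution} of the original, not merely the expected map and the Hoeffding bound. One presentational caveat: the paper \emph{defines} the overhead of a multi-instrument protocol as $\sum_k|\a_k|$ ``as in QPD,'' whereas you derive $\max_{k,j}|c_{k,j}|$ from Hoeffding for your more general class; the two agree on the paper's protocol family (where every coefficient has magnitude $\gamma=\sum_k|\a_k|$), but it would be worth a sentence making that identification explicit so the reader sees you are proving the same statement.
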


The proof of Theorem~\ref{theorem:one_quantum_instrument} can be found in Appendix~\ref{app:one_instrument}, where we show that any linear combination of twisted channels can be reduced to a scaled twisted channel without changing the sampling overhead.
This theorem implies that measurement-controlled post-processing requires repeating only a single physical operation, which is more hardware-friendly compared to sampling from multiple physical operations as in QPD.

\section{Diamond norm as the simulation cost}
We have shown that in both QPD and measurement-controlled post-processing, a sampling overhead characterizing the required number of one-shot measurement rounds is associated with a decomposition of a Hermitian-preserving map. Considering that a Hermitian-preserving map's decomposition is not unique, we define a Hermitian-preserving map's simulation cost as the lowest possible sampling overhead associated with any its valid decomposition.
Within measurement-controlled post-processing, the simulation cost of a Hermitian-preserving map $\cE$ is defined as
\begin{align}\label{eq:cost_def}
    \gamma_\tc \lrp{\cE} \coloneqq \min \lrc{\a ~\middle|~ \cE = \a \cT,~ \a \geq 0,~ \cT \in \tc},
\end{align}
where $\tc$ denotes the set of twisted channels.
In the following, we demonstrate that this simulation cost is simply the diamond norm of the map.

The diamond norm, also known as the completely bounded trace norm, is a fundamental quantity in quantum information~\cite{kitaev1997quantum}. It serves as a measure of distance between quantum channels and is efficiently computable by a semi-definite program (SDP)~\cite{watrous2018theory}.
Given a Hermitian-preserving map $\cE_{A\to B}$, its diamond norm is defined as
\begin{align}
    \lrV{\cE}_\diamond \coloneqq \max_{\rho_{A'A}} \lrV{\id_{A'} \ox \cE_{A\to B}(\rho_{A'A})}_1,
\end{align}
where the optimization is over all states $\rho_{A'A}$ and the dimension of the system $A'$ is equal to the dimension of system $A$. The map $\id_{A'}$ denotes the identity channel on the system $A'$, and $\lrV{\cdot}_1$ is the trace norm.

Despite the widespread usage of diamond norm in quantum information theory, its operational interpretation for general Hermitian-preserving maps has been missing.
In the task of quantum channel discrimination, the diamond norm of the difference between a pair of quantum channels determines the maximum probability of successfully distinguishing between them in a one-shot setting~\cite{rosgen2005hardness, gilchrist2005distance}.
Recently, it was shown that the diamond norm of any Hermitian-preserving map that is proportional to a trace-preserving map is equal to the map's simulation cost using QPD~\cite{Regula2021a}.

However, for general Hermitian-preserving maps, this equality between the cost and the diamond norm does not hold. In an extreme case, the simulation cost can be twice as large as the map's diamond norm.
In Theorem~\ref{theorem:diamond_norm}, we show that this unpleasant gap between the cost in QPD and the diamond norm can be remedied. A formal proof of this theorem is given in Appendix~\ref{app:diamond}.

\begin{theorem}[Diamond norm is the cost]\label{theorem:diamond_norm}
    Let $\cE_{A\to B}$ be an arbitrary Hermitian-preserving map. Then, its simulation cost using a twisted channel can be obtained by the following SDP:
    \begin{align}\label{eq:cost_sdp}
        \gamma_\tc(\cE) =& \min \big\{ \alpha ~\big|~ J^{\cE} = M^+ - M^-,~ M^\pm \geq 0, \nonumber\\
        &\qquad\qquad \tr_B\lrb{M^+ + M^-} = \alpha \idop \big\},
    \end{align}
    where $J^{\cE}$ is the Choi operator of $\cE$.
    Furthermore, this cost is equal to the map's diamond norm, i.e.,
    \begin{align}\label{eq:cost_eq_diamond}
        \gamma_\tc(\cE) = \|\cE\|_\diamond.
    \end{align}
\end{theorem}
Eq.~\eqref{eq:cost_eq_diamond} can be readily obtained by observing that SDP~\eqref{eq:cost_sdp} is also an SDP for the diamond norm as shown in Ref.~\cite{Regula2021a}. This equality between $\gamma_\tc(\cE)$ and $\|\cE\|_\diamond$ not only implies that measurement-controlled post-processing can simulate a Hermitian-preserving map at a cost much lower than QPD, but also establishes the diamond norm as a universal quantity measuring the simulability of a Hermitian-preserving map.
Thus, Theorem~\ref{theorem:diamond_norm} assigns diamond norm its first well-defined operational meaning for every Hermitian-preserving map.

From another perspective, the cost of simulating an unphysical map characterizes the map's non-physicality~\cite{Regula2021a, Jiang2020, guo2023noise}. All conventional physical operations, i.e., CPTN maps, have unit simulation costs. Within the framework where classical post-processing is allowed, physical operations are extended to twisted channels. Intuitively, the more unphysical a map is, the more expensive it is to simulate this map with physical operations.
In this sense, we treat non-physicality as a resource and twisted channels as free operations. Then, a map's non-physicality can be quantified by robustness measures, which are widely used in quantum resource theories~\cite{chitambar2019quantum, vidal1999robustness, harrow2003robustness, steiner2003generalized, brandao2007entanglement, almeida2007noise, napoli2016robustness, skrzypczyk2019robustness}.

Here, we consider the absolute robustness~\cite{vidal1999robustness} of a Hermitian-preserving map $\cE$, which we define as
\begin{align}\label{eq:robustness_def}
    R(\cE) \coloneqq \min \lrc{\lambda ~\middle|~ \frac{\cE + \lambda \cT}{1 + \lambda} \in \tc,~ \cT \in \tc}.
\end{align}
It turns out that this robustness measure is equivalent to the diamond norm in the way that
\begin{align}
    \|\cE\|_\diamond = 2 R(\cE) + 1
\end{align}
for all Hermitian-preserving maps $\cE$.
A proof of this equality can be found in Appendix~\ref{app:diamond}.

\section{Advantages of measurement-controlled post-processing in practice}
Here, we study information recovering and processing as two examples to demonstrate the advantage of measurement-controlled post-processing over QPD in tasks of practical interest.

Information recovering~\cite{Zhao2022} refers to the task that predicts the expectation value of a given observable $O$ with respect to an arbitrary state $\rho$ given its noisy copies $\cN(\rho)$, where the channel $\cN$ represents the noise.
This is equivalent to optimizing a Hermitian-preserving map $\cD$ so that $\tr\lrb{\cD\circ\cN(\rho)O} = \tr\lrb{\rho O}$ for any state $\rho$ and a fixed observable $O$, where the Hermitian-preserving map can be simulated either by classically sampling CPTN maps using QPD or by measurement-controlled post-processing.
More details about this task can be found in Appendix~\ref{app:info_rec}.

We compare the lowest sampling overheads incurred by these two methods to recover the desired expectation value in an example, and the results are presented in Fig.~\ref{fig:info_recovery}. In this example, the given observable is the sum of the four Pauli operators $X+Y+Z+I$. We consider three different types of noise: the amplitude damping noise $\cN_{\rm AD}^\epsilon(\cdot)$ with Kraus operators $\proj{0} + \sqrt{1-\epsilon}\proj{1}$ and $\sqrt{\epsilon}\ketbra{0}{1}$, the dephasing noise $\cN_{\rm deph}^\epsilon$ with Kraus operators $\sqrt{1-\epsilon/2} I$ and $\sqrt{\epsilon/2} Z$, and the depolarizing noise $\cN_{\rm depo}^\epsilon(\cdot) = (1-\epsilon)(\cdot) + \epsilon\tr[\cdot]I/2$. The parameter $\epsilon\in[0,1]$ indicates the noise level for all the three channels.
The lowest sampling overheads are solved using SDPs provided in Ref.~\cite{Zhao2022}. For measurement-controlled post-processing, this is equivalent to minimizing the diamond norm over all feasible maps $\cD$.
It is observed in Fig.~\ref{fig:info_recovery} that for all these noises, the twisted channel method incurs overheads significantly lower than those of QPD, and the gaps between them steadily enlarge as the noise level goes up.
In Appendix~\ref{app:info_rec}, we also present numerical results verifying that the number of measurement shots corresponding to the lowest sampling overhead incurred by measurement-controlled post-processing is sufficient to achieve an desired level of accuracy.

\begin{figure}[t]
    \centering
    \includegraphics[width=0.82\columnwidth]{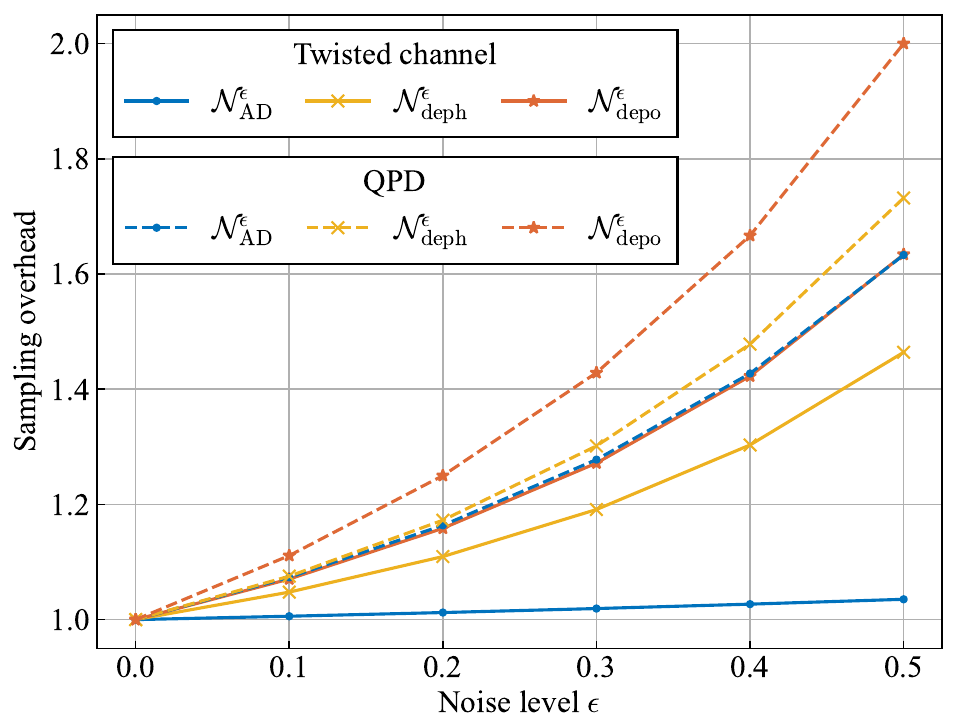}
    \caption{Comparison between sampling overheads of QPD and measurement-controlled post-processing for information recovering under common noises at different noise levels. The markers on the solid lines represent the overheads achieved by using twisted channels, i.e., measurement-controlled post-processing, and those on the dashed lines are the overheads achieved by QPD.}
    \label{fig:info_recovery}
\end{figure}

In addition to information recovering, the other application we consider here is the processing of quantum data, which involves a collection of tasks aimed at implementing extraction maps on input quantum states. Quantum algorithms have the potential to achieve quantum speed-up owing to the vast information storage capabilities of quantum systems. However, this advantage also poses challenges in processing valuable information when quantum data contain a surplus of irrelevant details. To optimize the efficiency and accuracy of quantum algorithms, it is crucial to implement an extraction map that minimizes the spatial dimensions of quantum data while retaining as much useful information as possible. This is especially important for some quantum machine learning tasks. For example, in the quantum convolutional neural networks proposed in Ref.~\cite{cong2019quantum}, such operations are implemented by measurement-controlled unitaries and hence are restricted to physical maps only.

\begin{figure*}[t]
    \centering
    \includegraphics[width=0.9\textwidth]{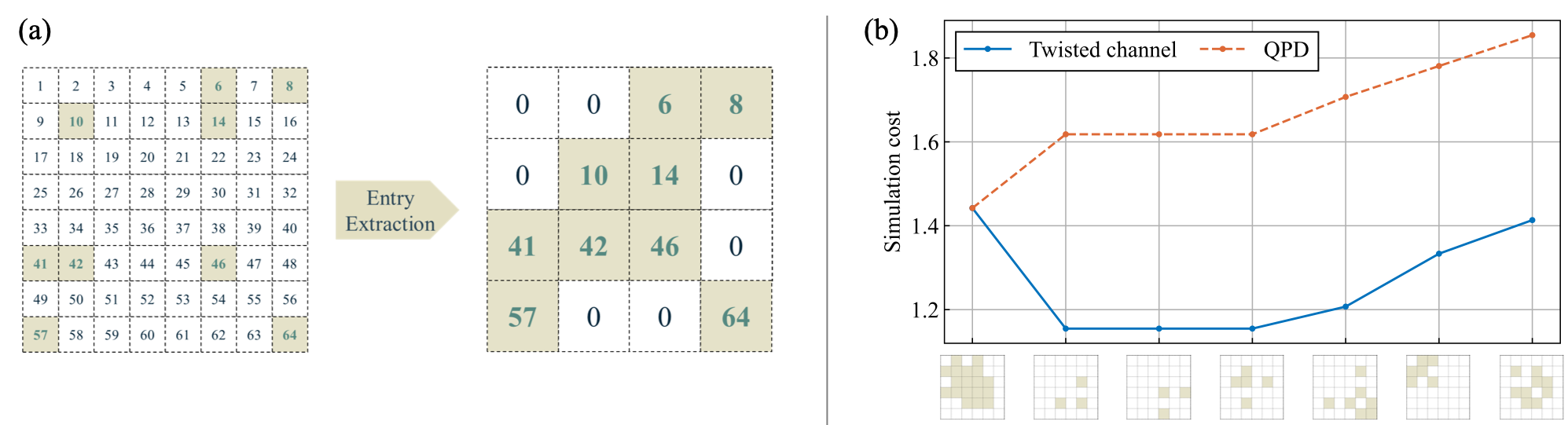}
    \caption{Illustration of entry extraction maps and a comparative evaluation of their implementation using QPD and the twisted channel method. (a) A typical instance of an entry extraction map, which converts an $8 \times 8$ matrix into a $4 \times 4$ matrix that comprises all the extracted entries while preserving their relative locations. (b) A plot comparing the costs of QPD and the twisted channel method across various entry extraction maps. All the maps share a common input dimension and are distinguished by the index sets they extract, as indicated on the horizontal axis.}
    \label{fig:entry extraction}
\end{figure*}

Twisted channels can expand the range of applicable extraction to encompass unphysical maps. We consider an elementary operation called the \emph{entry extraction map} as an example. In particular, entry extraction maps extract entry-wise information of an input quantum state and blend these entries with zeros to form a new matrix while preserving the relative positions of the entries. Additionally, when an off-diagonal entry is extracted, its symmetrical counterpart must be extracted as well to maintain Hermiticity. The graphical illustration of an entry extraction map can be found in Fig.~\ref{fig:entry extraction}(a), and its mathematical definition is presented in Appendix~\ref{app:entry_ext}.

To showcase the efficiency of quantum measurement, we compare the simulation costs between QPD and the twisted channel method for several entry extraction maps. The input quantum data for these maps are fixed to $6 \times 6$ matrices, while the output dimensions and indices sets extracted by these map are randomly selected. Fig.~\ref{fig:entry extraction}(b) illustrates that for all the selected extracting operations, measurement-controlled post-processing achieves the same or lower costs than QPD.

\section{Quantum interference supplies the advantage}
The advantage of measurement-controlled post-processing over QPD is substantiated by the numerical results presented above. Consequently, it is pertinent to inquire about the underlying physical property that contributes to these enhancements. Here, we suggest that quantum interference is the key of such improvements.

Upon closer examination, the only difference between measurement-controlled post-processing and QPD lies in their respective methodologies of operation sampling.
QPD first samples an operation according to a fixed priori probability distribution and then performs the sampled operation. On the other hand, measurement-controlled post-processing does not have such a priori probability distribution.
As depicted in Fig.~\ref{fig:model_comparison}(b), it first applies an aggregate quantum operation to the input state, which creates interference between the paths leading to different measurement outcomes associated with different physical operations waiting to be sampled.
The probability of getting each outcome is affected by the interference depending on the input state.
Upon measurement, the quantum system collapses to an output state corresponding to one particular operation based on the adaptive probability distribution.
Such dynamic assignment of probabilities makes interference the core ingredient for the enhancements brought by measurement-controlled post-processing.

\section{Discussion}
We demonstrate the power of quantum measurement in an important practical task by showing that it leads to a lower simulation cost of unphysical operations compared with QPD, a standard method widely used for unphysical operation simulation. The simulation costs when quantum measurement is employed reduce to the well-known diamond norm, thereby endowing the diamond norm with the first operational interpretation applicable to all Hermitian-preserving maps.

The measurement-controlled post-processing scheme introduced in this study has potential applications beyond the simulation of unphysical maps.
As more quantum algorithms and protocols incorporate classical randomness and post-processing as valuable tools, it is imperative to explore whether quantum measurement can be utilized to improve performance in various scenarios.
Potential use cases include circuit knitting~\cite{Mitarai2021a,piveteau2023circuit,jing2025circuit}, Hamiltonian simulation~\cite{campbell2019random, faehrmann2022randomizing, kiss2023importance}, and quantum error correction~\cite{piveteau2021error, lostaglio2021error, suzuki2022quantum}.
Investigating the feasibility and utility of employing quantum measurement in these scenarios represents an exciting avenue for future work.
Furthermore, the demonstrated advantage of quantum measurement in the task considered in this work opens up new possibilities for identifying practical computational advantages that can be realized in the near term. Our findings could serve as a promising starting point for further research in this direction.

\section*{Acknowledgments}
The authors thank Mingrui Jing and Wenjun Yu for fruitful discussions.
This work was partially supported by the National Key R\&D Program of China (Grant No.~2024YFB4504004, 2024YFE0102500), the National Natural Science Foundation of China (Grant. No.~12447107), the Guangdong Provincial Quantum Science Strategic Initiative (Grant No.~GDZX2403008, GDZX2403001), the Guangdong Provincial Key Lab of Integrated Communication, Sensing and Computation for Ubiquitous Internet of Things (Grant No.~2023B1212010007), the Quantum Science Center of Guangdong-Hong Kong-Macao Greater Bay Area, and the Education Bureau of Guangzhou Municipality.
%

\appendix
\section{Sampling Overhead}\label{app:overhead}
Here, we rigorously show that the sampling overhead of either QPD or measurement-controlled post-processing can be characterized by the largest magnitude of post-processing coefficients.
Hoeffding's inequality~\cite{hoeffding1994probability} states that, given $M$ random variables $X_1,\dots,X_M$, if $X_j$ take its value from intervals $[a_j, b_j]$ for every $j$ from $1$ to $M$, then the sum of these random variables $S_M = \sum_{j=1}^M X_j$ satisfies
\begin{align}
    \Pr(|S_M - {\rm E}[S_M]| \geq \varepsilon) \leq 2\exp\lrp{-\frac{2\varepsilon^2}{\sum_{j=1}^M (b_i - a_i)^2}}
\end{align}
for any error tolerance $\varepsilon \geq 0$, where ${\rm E}[S_M]$ denotes the expected value of $S_M$.

In our case, no matter we use measurement-controlled post-processing or QPD, in each round, we perform a one-shot measurement with a given observable $O$ to get a measurement outcome $o$ and then multiply this outcome with some post-processing coefficient $\lambda$. After $M$ rounds of sampling and measurement, the final estimation of the expectation value of $O$ is given by
\begin{align}
    S_M = \frac{1}{M} \sum_{j=1}^M \lambda_j o_j,
\end{align}
where $\lambda_j$ and $o_j$ denote the post-processing coefficient and the measurement outcome, respectively, in the $j$-th round. Applying Hoeffding's inequality to our case, each random variable $X_j$ corresponds to the random variable $\lambda_j o_j / M$.
The measurement outcome $o_j$ is a random variable such that $-\|O\|_\infty \leq o_j \leq \|O\|_\infty$.
As a result, we have $-|\lambda_j|\|O\|_\infty / M \leq X_j \leq |\lambda_j|\|O\|_\infty / M$.
The post-processing coefficient $\lambda_j$ is also a random variable whose value depends on which CPTN map is sampled in this round, and $|\lambda_j|$ is bounded as $0 \leq |\lambda_j| \leq \lambda_{\rm max}$, where $\lambda_{\rm max}$ is the largest absolute value among all post-processing coefficients. Therefore, we obtain $-\lambda_{\rm max}\|O\|_\infty / M \leq X_j \leq \lambda_{\rm max}\|O\|_\infty / M$.
Then, by Hoeffding's inequality, we arrive at
\begin{align}
    &\Pr(|S_M - {\rm E}[S_M]| \geq \varepsilon) \nonumber\\
    \leq &2\exp\lrp{-\frac{2\varepsilon^2}{\sum_{j=1}^M \lrp{\frac{2}{M}\lambda_{\rm max}\|O\|_\infty}^2}}\\
    = &2\exp\lrp{-\frac{M \varepsilon^2}{2 \lambda_{\rm max}^2 \|O\|_\infty^2 }}.
\end{align}
Note that ${\rm E}[S_M]$ here is the ideal expectation value that we aim to estimate.
If we want the probability of $|S_M - {\rm E}[S_M]| \geq \varepsilon$ to be no larger than a preset value $\delta$, it would be sufficient to require that
\begin{align}
    2\exp\lrp{-\frac{M\varepsilon^2}{2 \lambda_{\rm max}^2 \|O\|_\infty^2}} \leq \delta
\end{align}
With straightforward algebraic calculation, one can find that the requirement above can be equivalently expressed as a requirement on the number of sampling rounds $M$:
\begin{align}\label{appeq:hoeffding}
    M \geq \frac{2 \lambda_{\rm max}^2 \|O\|_\infty^2 \log\frac{2}{\delta}}{\varepsilon^2}.
\end{align}
As $\|O\|_\infty$, $\epsilon$, and $\delta$ are independent of what sampling method we use, this inequality implies that $\lambda_{\rm max}$ is indeed the key quantity characterizing the performance of a sampling method.

For QPD, the post-processing coefficient $\lambda_j$ in each round is either $\gamma$ or $-\gamma$, and thus $\lambda_{\rm max} = \gamma$, reducing Eq.~\eqref{appeq:hoeffding} to Eq.~\eqref{eq:samp_over_qpd}.
For measurement-controlled post-processing, $\lambda_{\rm max}$ is simply $\a_{\rm max}$, reducing Eq.~\eqref{appeq:hoeffding} to Eq.~\eqref{eq:samp_over_tc}.

\section{One Quantum Instrument Is All You Need}\label{app:one_instrument}
Before we give a proof of Theorem~\ref{theorem:one_quantum_instrument}, we first show that a single twisted channel scaled by a suitable coefficient is enough to simulate any Hermitian-preserving map.
\begin{proposition}
    A linear map $\cE$ can be written as $\a \cT$ for a twisted channel $\cT$ and a real number $\a$ if and only if it is Hermitian-preserving.
\end{proposition}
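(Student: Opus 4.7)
The plan is to prove both directions of this equivalence separately. The forward direction is immediate: if $\cE = \a \cT$ with $\cT = \sum_j s_j \cM_j$ as in Definition~\ref{def:twisted_channel}, then each $\cM_j$ is CP (and hence Hermitian-preserving), while the signs $s_j$ and the scalar $\a$ are real, so $\cE$ is a real linear combination of Hermitian-preserving maps and therefore itself Hermitian-preserving.

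For the converse, I would work at the level of Choi operators. Since $\cE$ is Hermitian-preserving, its Choi operator $J^{\cE}$ is Hermitian, so the Jordan decomposition yields $J^{\cE} = M^+ - M^-$ with $M^\pm \geq 0$. Via the \Choi isomorphism let $\cM^\pm$ denote the CP maps whose Choi operators are $M^\pm$; then $\cE = \cM^+ - \cM^-$. Without loss of generality assume $\a \geq 0$, since negating every sign $s_j$ converts $\cT$ into $-\cT$. I would then fix any $\a \geq \lrV{\tr_B\lrb{M^+ + M^-}}_\infty$, which guarantees that $\cM^+/\a$ and $\cM^-/\a$ are each CPTN and that their sum is CPTN as well.

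To promote this pair into a genuine quantum instrument, I would introduce a CP padding map $\cM_0$ whose Choi operator equals $\tfrac{1}{2}\lrp{\idop_A - \tr_B\lrb{(M^+ + M^-)/\a}} \ox \proj{0}_B$; this operator is PSD by the choice of $\a$, and $\cM^+/\a + \cM^-/\a + 2\cM_0$ is then CPTP by construction. Labelling the outcomes so that two distinct labels are associated with the same CP map $\cM_0$, the collection $\lrc{\cM^+/\a,\, \cM^-/\a,\, \cM_0,\, \cM_0}$ is a valid quantum instrument, and attaching signs $+1,-1,+1,-1$ produces the twisted channel $\cT = \cM^+/\a - \cM^-/\a + \cM_0 - \cM_0 = \cE/\a$, giving $\cE = \a\cT$. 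The only subtle step is padding into a full instrument without disturbing the signed sum; the trick of including $\cM_0$ twice with opposite signs handles this cleanly, and I expect it to be the only non-routine ingredient of the proof.
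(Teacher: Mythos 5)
Your proof is correct, and for the ``if'' direction it takes a genuinely different route from the paper's. The paper obtains the starting decomposition by citing an external result (Proposition~1 of Ref.~\cite{buscemi2013direct}) that writes any Hermitian-preserving $\cE$ as $\sum_j \a_j \cN_j$ with $\lrc{\cN_j}$ already a quantum instrument, then normalizes by $\a_{\max}=\max_j|\a_j|$, absorbs $|\a_j/\a_{\max}|$ into each map, and pads with $\cM'-\cM'$ where $\cM'$ is merely asserted to exist. You instead derive everything from scratch: the Jordan decomposition of the Hermitian Choi operator gives $\cE=\cM^+-\cM^-$ as a difference of two CP maps, the normalization constant $\a\geq\lrV{\tr_B\lrb{M^++M^-}}_\infty$ makes both maps and their sum CPTN, and you exhibit the padding map $\cM_0$ explicitly via its Choi operator $\tfrac{1}{2}\lrp{\idop_A-\tr_B\lrb{(M^++M^-)/\a}}\ox\proj{0}_B$. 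The padding-with-opposite-signs trick is common to both arguments. What your version buys is self-containedness and an explicit construction; it also anticipates the two-CPTN-map representation of twisted channels that the paper only introduces later, in the proofs of Lemma~\ref{lemma:one_twisted_channel} and Theorem~\ref{theorem:diamond_norm}. What the paper's version buys is generality of the starting point (an arbitrary multi-term instrument decomposition) at the price of an external citation. The only cosmetic point in your write-up: the ``without loss of generality $\a\geq 0$'' remark belongs to the forward direction (or is simply unnecessary, since in the converse you are free to choose $\a>0$), and you should note the trivial case $\cE=0$ separately if you insist on dividing by $\a$.
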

\begin{proof}
    For the ``only if'' part, we note that $\cT$ is Hermitian-preserving as it is a linear combination of CPTN maps, which are Hermitian-preserving. Hence, $\a\cT$ is a Hermitian-preserving map.

    For the ``if'' part, we note that any Hermitian-preserving linear map $\cE$ can be written as a linear combination of CPTN maps that constitute a quantum instrument, i.e., $\cE = \sum_j \a_j\cN_j$ for each $\cN_j$ being CPTN and $\sum_j \cN_j$ being CPTP (see Proposition 1 in Ref.~\cite{buscemi2013direct}). Denoting $\a_{\max} \coloneqq \max_j |\a_j|$ and $\a_j' \coloneqq \a_j / \a_{\max}$, we have $\cE = \a_{\max} \sum_j \a_j' \cN_j$. By definition, $|\a_j'| \leq 1$ and thus each $\cM_j \coloneqq |\a_j'| \cN_j$ is CPTN and so is $\sum_j \cM_j$. Let $\cM'$ be a CPTN map such that $\sum_j \cM_j + 2\cM'$ is CPTP. Note that such a map always exists. Then, we can write the Hermitian-preserving map $\cE$ as $\cE = \a_{\max} \lrp{\sum_j s_j \cM_j + \cM' - \cM'}$, where $s_j \coloneqq \sgn(\a_j')$ denotes the sign of $\a_j'$ and $\sum_j s_j \cM_j + \cM' - \cM'$ is a twisted channel.
\end{proof}

Now we show that a single twisted channel is not only enough, but also optimal. That is, as stated in Theorem~\ref{theorem:one_quantum_instrument}, allowing the sampling of multiple quantum instruments does not lower the sampling overhead of simulation.

\renewcommand\theproposition{\ref{theorem:one_quantum_instrument}}
\setcounter{proposition}{\arabic{proposition}-1}
\begin{theorem}
    Under measurement-controlled post-processing, any protocol that involves the sampling of multiple quantum instruments is equivalent to a protocol using a single quantum instrument in terms of the simulated map and the sampling overhead.
\end{theorem}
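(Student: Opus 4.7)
The plan is to show that any linear combination of twisted channels used in measurement-controlled post-processing can be collapsed into a single scaled twisted channel without changing the simulated map or the sampling overhead. A protocol that samples multiple quantum instruments corresponds to a decomposition $\cE = \sum_{k=1}^K \beta_k \cT_k$, where each $\cT_k = \sum_j s_{k,j} \cM_{k,j}$ is a twisted channel (with $s_{k,j}\in\{\pm 1\}$ and $\{\cM_{k,j}\}_j$ a quantum instrument) and $\beta_k\in\RR$. Classical outer sampling selects $k$ with probability $p_k \coloneqq |\beta_k|/\gamma$, where $\gamma \coloneqq \sum_k |\beta_k|$, after which measurement-controlled post-processing assigns the overall post-processing coefficient $\sgn(\beta_k)\,\gamma\,s_{k,j}$ to outcome $(k,j)$; the induced sampling overhead is therefore $\gamma$.

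First, I would merge the $K$ instruments into one by defining $\tilde\cM_{k,j} \coloneqq p_k\,\cM_{k,j}$ and checking that $\{\tilde\cM_{k,j}\}_{(k,j)}$ is itself a valid quantum instrument. Each $\tilde\cM_{k,j}$ is CPTN because scaling a CPTN map by $p_k\in[0,1]$ preserves CPTN, and
\[
\sum_{k,j}\tilde\cM_{k,j} \;=\; \sum_k p_k \sum_j \cM_{k,j}
\]
is a convex combination of CPTP maps, hence CPTP.

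Next, I would run measurement-controlled post-processing with this single instrument, assigning to outcome $(k,j)$ the post-processing coefficient $\alpha_{k,j} \coloneqq \sgn(\beta_k)\,\gamma\,s_{k,j}$. A direct computation gives
\[
\sum_{k,j}\alpha_{k,j}\,\tilde\cM_{k,j} \;=\; \sum_{k,j}\sgn(\beta_k)\,\gamma\,s_{k,j}\,p_k\,\cM_{k,j} \;=\; \sum_{k,j}\beta_k\,s_{k,j}\,\cM_{k,j} \;=\; \cE,
\]
while $\max_{k,j}|\alpha_{k,j}| = \gamma$, so the overhead is unchanged. Writing $\alpha_{k,j} = \gamma\,\sgn(\beta_k)\,s_{k,j}$ and identifying $\cT \coloneqq \sum_{k,j}\sgn(\beta_k)\,s_{k,j}\,\tilde\cM_{k,j}$ exhibits $\cE = \gamma\,\cT$ as a single scaled twisted channel in the sense of Definition~\ref{def:twisted_channel}. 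The converse direction is immediate because a single-instrument protocol is the special case $K=1$ of a multi-instrument protocol.

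The argument is essentially linear-algebraic, so no step poses a serious obstacle. The only substantive point is confirming that regrouping CPTN maps across different instruments, weighted by the outer classical probabilities $p_k$, still yields a CPTP sum; this is exactly convexity of the set of CPTP maps. A minor edge case is $\gamma = 0$, in which $\cE$ is the zero map and is trivially realized at zero cost.
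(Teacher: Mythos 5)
Your proposal is correct and follows essentially the same route as the paper: the paper's Lemma~\ref{lemma:one_twisted_channel} likewise absorbs the outer sampling probabilities $p_k$ into the CPTN maps and invokes convexity of the CPTN and CPTP sets to show the merged collection is again a quantum instrument, yielding a single scaled twisted channel with unchanged overhead $\sum_k|\beta_k|$. Your version just keeps the double index $(k,j)$ explicit rather than first grouping each $\cT_k$ into its positive and negative parts $\cT_k^{\pm}$, which is a cosmetic difference.
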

\renewcommand{\theproposition}{\arabic{proposition}}

A measurement-controlled post-processing protocol with multiple quantum instruments is described as a linear combination of twisted channels and its sampling overhead is the sum of the absolute values of the coefficients in the combination as in QPD. In the following lemma, we show that any linear combination of twisted channels can be rewritten as a single twisted channel scaled by the combination's sampling overhead.

\begin{lemma}\label{lemma:one_twisted_channel}
    For any linear combination of twisted channels $\sum_j \a_j \cT_j$, there exists a single twisted channel $\cT$ such that $\a \cT = \sum_j \a_j \cT_j$ for $\a = \sum_j |\a_j|$.
\end{lemma}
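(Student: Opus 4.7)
The plan is to show that an arbitrary linear combination of twisted channels, with total weight $\a=\sum_j|\a_j|$, can be repackaged as a single twisted channel scaled by $\a$, by merging the underlying quantum instruments into one larger instrument whose probability weights absorb $|\a_j|/\a$, and whose signs absorb $\sgn(\a_j)$ together with the internal signs of each $\cT_j$.

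Concretely, I would first write each $\cT_j = \sum_k s_{j,k} \cM_{j,k}$ in its defining form from Definition~\ref{def:twisted_channel}, where $s_{j,k}\in\{+1,-1\}$ and $\{\cM_{j,k}\}_k$ is a quantum instrument. Setting $p_j \vcentcolon= |\a_j|/\a$, which is a probability distribution on the outer index $j$, I would then define the reweighted maps
\begin{align}
    \widetilde{\cM}_{j,k} \vcentcolon= p_j\, \cM_{j,k}.
\end{align}
Each $\widetilde{\cM}_{j,k}$ is CPTN since $p_j\in[0,1]$ and $\cM_{j,k}$ is CPTN. The key observation is that
\begin{align}
    \sum_{j,k} \widetilde{\cM}_{j,k} \;=\; \sum_j p_j \lrp{\sum_k \cM_{j,k}}
\end{align}
is a convex combination of CPTP maps (each inner sum is CPTP because $\{\cM_{j,k}\}_k$ is a quantum instrument), hence itself CPTP. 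Therefore $\{\widetilde{\cM}_{j,k}\}_{j,k}$ is a bona fide quantum instrument indexed by the compound label $(j,k)$.

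It remains to pick the signs. Define $\widetilde{s}_{j,k} \vcentcolon= \sgn(\a_j)\, s_{j,k}\in\{+1,-1\}$ and
\begin{align}
    \cT \vcentcolon= \sum_{j,k} \widetilde{s}_{j,k}\, \widetilde{\cM}_{j,k}.
\end{align}
Then $\cT$ is a twisted channel by Definition~\ref{def:twisted_channel}, and a direct substitution gives $\a\,\cT = \sum_{j,k} \a\, \sgn(\a_j)\, p_j\, s_{j,k}\, \cM_{j,k} = \sum_j \a_j \sum_k s_{j,k}\cM_{j,k} = \sum_j \a_j \cT_j$, as required.

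I do not expect a real obstacle here beyond bookkeeping; the only subtle point is verifying that the reweighted collection $\{\widetilde{\cM}_{j,k}\}$ actually forms a quantum instrument, which is why the normalization by $\a=\sum_j|\a_j|$ is the natural choice rather than any smaller constant. Once the single-instrument statement is established, Theorem~\ref{theorem:one_quantum_instrument} follows immediately, since the sampling overhead associated with the combination $\sum_j \a_j \cT_j$ is $\sum_j |\a_j|=\a$, matching the overhead of simulating $\a\cT$ with the single quantum instrument underlying $\cT$.
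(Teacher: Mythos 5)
Your proof is correct and follows essentially the same route as the paper's: normalize the coefficients into the probability distribution $p_j = |\a_j|/\a$, absorb the signs, and invoke convexity of the sets of CPTN and CPTP maps to confirm the merged collection is a quantum instrument (the paper merely groups each instrument's elements into $\cT_j^\pm$ first, which is the same bookkeeping). The only cosmetic issue is that $\sgn(\a_j)$ is undefined as an element of $\{+1,-1\}$ when $\a_j=0$; just discard those vanishing terms, as the paper's ``assume every $\a_j$ is positive'' implicitly does.
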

\begin{proof}
    Without loss of generality, we can assume that every $\a_j$ is positive because we can flip the sign of the corresponding twisted channel otherwise. Moreover, any twisted channel $\cT = \sum_j s_j \cM_j$ can be written as a difference between two CPTN maps that add up to a CPTP map, i.e., $\cT = \cT^+ - \cT^-$, where $\cT^\pm \coloneqq \sum_{s_j = \pm1} \cM_j$. Then, denoting $\sum_j \a_j$ by $\a$ and $\a_j/\a$ by $p_j$, the linear combination can be simplified as
    \begin{align}
        \sum_j \a_j \cT_j = \a \sum_j p_j (\cT_j^+ - \cT_j^-) = \a \lrp{\cW^+ - \cW^-}
    \end{align}
    where $\cW^\pm \coloneqq \sum_j p_j \cT_j^\pm$. Note that $\sum_j p_j = 1$, and $\cT_j^\pm$ are CPTN, so $\cW^\pm$ are also CPTN due to the convexity of the set of CPTN maps. Also, $\cW^+ + \cW^- = \sum_j p_j (\cT_j^+ + \cT_j^-)$ is CPTP because $\cT_j^+ + \cT_j^-$ is CPTP for each $j$ and the set of CPTP maps is convex. That is, $\cW^+ - \cW^-$ is a twisted channel.
\end{proof}

A scaled twisted channel $\a\cT$ has a sampling overhead $|\a|$. On the other hand, a linear combination of twisted channels $\sum_j \a_j \cT_j$ has a sampling overhead of $\sum_j |\a_j|$ as we need to classically sample the quantum instrument corresponding to each $\cT_j$ as in QPD. Since Lemma~\ref{lemma:one_twisted_channel} shows that for any such linear combination, there is always a scaled twisted channel $\a\cT$ satisfying $\a\cT = \sum_j \a_j \cT_j$ and $\a =\sum_j |\a_j|$, we conclude that there exists a single quantum instrument (corresponds to $\cT$) that simulates the same Hermitian-preserving map as sampling a set of quantum instruments (corresponds to $\lrc{\cT_j}$) at the same overhead under measurement-controlled post-processing.

\section{Diamond Norm Is the Cost}\label{app:diamond}
The diamond norm of a Hermitian-preserving map $\cE_{A\to B}$ is defined as
\begin{align}
    \lrV{\cE}_\diamond \coloneqq \max_{\rho_{A'A}} \lrV{\id_{A'} \ox \cE_{A\to B}(\rho_{A'A})}_1,
\end{align}
where the optimization is over all states $\rho_{A'A}$ and the dimension of the system $A'$ is equal to the dimension of system $A$. The map $\id_{A'}$ denotes the identity channel on the system $A'$, and $\lrV{\cdot}_1$ is the trace norm.
Here, we derive the SDP given in Theorem~\ref{theorem:diamond_norm} for the cost of simulating any Hermitian-preserving map using a twisted channel, which happens to be an SDP for the map's diamond norm as well.
\renewcommand\theproposition{\ref{theorem:diamond_norm}}
\setcounter{proposition}{\arabic{proposition}-1}
\begin{theorem}
    Let $\cE_{A\to B}$ be an arbitrary Hermitian-preserving map. Then, its simulation cost using a twisted channel can be obtained by the following SDP:
    \begin{align}
    \begin{aligned}
        \gamma_\tc(\cE) = \min \big\{ \alpha ~\big|~ &J^{\cE} = M^+ - M^-,~ M^\pm \geq 0,\\
        &\tr_B\lrb{M^+ + M^-} = \alpha \idop \big\},
    \end{aligned}
    \end{align}
    where $J^{\cE}$ is the Choi operator of $\cE$.
    Furthermore, this cost is equal to the map's diamond norm, i.e.,
    \begin{align}
        \gamma_\tc(\cE) = \|\cE\|_\diamond.
    \end{align}
\end{theorem}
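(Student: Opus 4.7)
The plan is to prove Theorem~\ref{theorem:diamond_norm} in two stages: first establish the SDP characterization of $\gamma_\tc(\cE)$, and then match this SDP with $\lrV{\cE}_\diamond$ via Lagrangian duality.

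For the SDP characterization I would invoke the canonical form extracted from the proof of Lemma~\ref{lemma:one_twisted_channel}: every twisted channel $\cT$ can be written as $\cT = \cT^+ - \cT^-$ with $\cT^\pm$ CPTN and $\cT^+ + \cT^-$ CPTP. Passing to the Choi picture, the relation $\cE = \alpha\cT$ with $\alpha \geq 0$ becomes $J^\cE = M^+ - M^-$ for $M^\pm \coloneqq \alpha J^{\cT^\pm} \geq 0$, while the CPTP property of $\cT^+ + \cT^-$ becomes $\tr_B[M^+ + M^-] = \alpha \idop$. Conversely, any feasible $(M^\pm,\alpha)$ with $\alpha > 0$ yields a twisted channel by defining $\cT^\pm$ through their Choi operators $M^\pm/\alpha$. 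Hence $\gamma_\tc(\cE)$ coincides with the stated SDP.

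The easy inequality $\lrV{\cE}_\diamond \leq \gamma_\tc(\cE)$ follows from a triangle-inequality argument: for any $\rho_{A'A}$ and any simulation $\cE = \alpha(\cT^+ - \cT^-)$,
\begin{align*}
\lrV{(\id \ox \cE)(\rho)}_1 &\leq \alpha\lrp{\lrV{(\id \ox \cT^+)(\rho)}_1 + \lrV{(\id \ox \cT^-)(\rho)}_1} \\
&= \alpha\,\tr\lrb{(\cT^+ + \cT^-)(\tr_{A'}\rho)} = \alpha,
\end{align*}
using positivity of $(\id \ox \cT^\pm)(\rho)$ and the CPTP property of $\cT^+ + \cT^-$. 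The reverse inequality $\gamma_\tc(\cE) \leq \lrV{\cE}_\diamond$ I would extract from the Lagrangian dual of the primal SDP. Introducing Hermitian multipliers $Y$ on $AB$ for $J^\cE = M^+ - M^-$ and $Z$ on $A$ for the trace constraint, the dual reduces to
\begin{align*}
\max\ \tr[YJ^\cE]\quad \text{s.t.}\ Z \geq 0,\ \tr[Z]=1,\ -Z \ox \idop_B \leq Y \leq Z \ox \idop_B.
\end{align*}
To bound this dual value by $\lrV{\cE}_\diamond$, I would take any dual-feasible $(Y,Z)$, set $K \coloneqq Z^{1/2}$ and $W \coloneqq (Z^{-1/2} \ox I)\,Y\,(Z^{-1/2} \ox I)$, and observe that $\ket{\psi}\coloneqq (K \ox I)\ket{\Omega}$ with $\ket{\Omega}=\sum_i\ket{i}\ket{i}$ is a unit pure state (since $\tr[K^\dagger K]=\tr Z=1$) while $\lrV{W}_\infty \leq 1$ follows from the operator sandwich $\pm Y \leq Z \ox \idop_B$. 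A direct computation then gives
\begin{align*}
\tr[YJ^\cE] = \tr\lrb{W\,(\id \ox \cE)(\proj{\psi})} \leq \lrV{(\id \ox \cE)(\proj{\psi})}_1 \leq \lrV{\cE}_\diamond.
\end{align*}
Combined with strong duality, this yields $\gamma_\tc(\cE) = \text{primal} = \text{dual} \leq \lrV{\cE}_\diamond$.

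The main obstacle is the reconstruction step $Y = (K^\dagger \ox I)\,W\,(K \ox I)$ in the lower bound: it requires $Z$ to be invertible, which is not guaranteed a priori. When $Z$ is rank-deficient, the operator inequalities $\pm Y \leq Z \ox \idop_B$ force $Y$ to vanish on $\ker(Z)\ox\cH_B$, so one can either restrict the construction to $\supp(Z)\ox\cH_B$ and extend $W$ arbitrarily with norm at most one on the complement, or run a continuity argument on $Z + \eps\idop_A/d_A$ as $\eps \to 0^+$. Strong duality itself is routine by Slater's condition, which is satisfied after padding $M^\pm$ with sufficiently large positive multiples of $\idop_{AB}$ to make both PSD constraints strict while keeping the equalities. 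Once these technical points are handled, the sandwich $\lrV{\cE}_\diamond \leq \gamma_\tc(\cE) \leq \lrV{\cE}_\diamond$ forces the equality and the SDP claim simultaneously.
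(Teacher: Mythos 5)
Your proposal is correct, and its first stage---reducing a scaled twisted channel to a difference of CPTN maps summing to a CPTP map, passing to Choi operators, and trimming the redundant constraints to obtain the SDP for $\gamma_\tc(\cE)$---is essentially identical to the paper's argument. Where you diverge is the second stage: the paper disposes of the equality $\gamma_\tc(\cE) = \lrV{\cE}_\diamond$ in one line by citing Ref.~\cite{Regula2021a} for the fact that the resulting SDP is precisely the known SDP for the diamond norm, whereas you reprove that fact from scratch. Your self-contained route has two genuinely useful pieces: the triangle-inequality bound $\lrV{(\id \ox \cE)(\rho)}_1 \leq \a\lrp{\tr\lrb{(\id\ox\cT^+)(\rho)} + \tr\lrb{(\id\ox\cT^-)(\rho)}} = \a$ gives a direct operational reason why any twisted-channel simulation overhead upper-bounds the diamond norm, and the Lagrangian dual you derive (maximize $\tr[YJ^\cE]$ over $\tr[Z]=1$, $Z\geq 0$, $-Z\ox\idop_B \leq Y \leq Z\ox\idop_B$) together with the substitution $\ket{\psi} = (Z^{1/2}\ox I)\ket{\Omega}$, $W = (Z^{-1/2}\ox I)Y(Z^{-1/2}\ox I)$ is exactly the standard derivation of the diamond-norm SDP, correctly executed. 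Your handling of the technical caveats is also sound: the rank-deficiency of $Z$ is genuinely forced to be harmless by $\pm Y \leq Z\ox\idop_B$ (which kills $Y$ on $\ker(Z)\ox\cH_B$), and Slater's condition holds because a feasible point exists (Hermitian-preservation guarantees some decomposition $J^\cE = M^+ - M^-$ with the trace constraint, per Appendix~A) and adding $c\,\idop_{AB}$ to both $M^\pm$ shifts $\tr_B[M^++M^-]$ by a multiple of $\idop_A$, preserving feasibility while making the cone constraints strict. In short: same SDP derivation as the paper, plus an independent proof of a result the paper imports by citation; the citation is shorter, your version is self-contained and exposes why the two quantities coincide.
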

\renewcommand{\theproposition}{\arabic{proposition}}
\begin{proof}
The simulation cost of a Hermitian-preserving map $\cE$ with a twisted channel is the smallest possible $\a$ such that $\a\cT = \cE$ for a twisted channel $\cT$ and $\a \geq 0$.
In the proof of Lemma~\ref{lemma:one_twisted_channel}, we showed that any twisted channel can be represented as a difference between two CPTN maps, which leads to the following formulation of the simulation cost:
\begin{align}\label{appeq:tc_cost_original}
\begin{aligned}
    \gamma_\tc \lrp{\cE} = \min \big\{\a ~\big|~ &\cE = \a (\cM^+ - \cM^-),~ \a \geq 0,\\
    &\cM^\pm~\text{are CPTN},~ \cM^+ + \cM^-~\text{is CPTP}\big\}.
\end{aligned}
\end{align}
By the \Choi isomorphism~\cite{jamiolkowski1972linear, choi1975completely}, we can write the minimization above using the Choi operators of $\cE$ and $\cM^{\pm}$. A linear map $\cN_{A\to B}$'s Choi operator is defined as
\begin{align}
    J^\cN_{AB} \coloneqq \id_{A} \ox \cN_{A'\to B} (\proj{\Gamma}_{AA'}),
\end{align}
where $\ket{\Gamma}_{AA'} \coloneqq \sum_{j=0}^{d_A} \ket{j}_A\ket{j}_{A'}$ is the unnormalized maximally entangled state, $d_A$ is the dimension of the Hilbert space $\cH_A$ associated to system $A$, and $\cH_{A'}$ is isomorphic to $\cH_A$. The linear map $\cN_{A\to B}$ is completely positive if and only if its Choi operator is positive. It is trace-preserving if and only if $\tr_B\lrb{J^\cN_{AB}} = \idop_A$, and it is trace-non-increasing if and only if $\tr_B\lrb{J^\cN_{AB}} \leq \idop_A$. Then, Eq.~\eqref{appeq:tc_cost_original} can be rewritten as
\begin{align}
\begin{aligned}
    \gamma_\tc \lrp{\cE} = \min \Bigl\{\a ~\Big|~ &J^{\cE} = \a \lrp{J^{\cM^+} - J^{\cM^-}},~ \a \geq 0,\\
    &J^{\cM^\pm} \geq 0,~ \tr_B\lrb{J^{\cM^\pm}} \leq \idop_A,\\
    &\tr_B\lrb{J^{\cM^+} + J^{\cM^-}} = \idop_A\Bigr\}.
\end{aligned}
\end{align}
Denoting $M^\pm \coloneqq \a J^{\cM^\pm}$ and take them into the equation above, we obtain the following SDP:
\begin{align}
\begin{aligned}
    \gamma_\tc \lrp{\cE} = \min \bigl\{\a ~\big|~ &J^{\cE} = M^+ - M^-,~ \a \geq 0,\\
    &M^\pm \geq 0,~ \tr_B\lrb{M^\pm} \leq \a\idop_A,\\
    &\tr_B\lrb{M^+ + M^-} = \a\idop_A\bigr\}.
\end{aligned}
\end{align}
Note that both $\a \geq 0$ and $\tr_B\lrb{M^\pm} \leq \a\idop_A$ are already implied by $M^\pm \geq 0$ and $\tr_B\lrb{M^+ + M^-} = \a\idop_A$.
Thus, we can trim this SDP to arrive at the one stated in Theorem~\ref{theorem:diamond_norm}:
\begin{align}\label{appea:tc_cost_final}
\begin{aligned}
    \gamma_\tc \lrp{\cE} = \min \big\{\a ~\big|~ &J^{\cE} = M^+ - M^-,~ M^\pm \geq 0,\\
    &\tr_B\lrb{M^+ + M^-} = \a\idop_A\big\}.
\end{aligned}
\end{align}

For the equivalence between $\gamma_\tc$ and the diamond norm, we refer to Ref.~\cite{Regula2021a}, where the authors show that the SDP in Eq.~\eqref{appea:tc_cost_final} is an SDP for the diamond norm. This completes the proof of Theorem~\ref{theorem:diamond_norm}.
\end{proof}

\begin{remark}
    According to its definition in Eq.~\eqref{eq:cost_def}, the simulation cost $\gamma_\tc$ is the gauge~\cite{aubrun2017alice} of the set of twisted channels. Hence, Theorem~\ref{theorem:diamond_norm} implies that the gauge of the set of twisted channels is simply the diamond norm, or, equivalently, the set of twisted channels is the unit ball in the vector space of Hermitian-preserving maps under the diamond norm.
\end{remark}

In the main text, we also define a related quantity, that is, the absolute robustness $R(\cE)$ of a Hermitian-preserving map $\cE$:
\begin{align}\label{appeq:robustness_def}
    R(\cE) \coloneqq \min \lrc{\lambda ~\middle|~ \frac{\cE + \lambda \cT}{1 + \lambda} \in \tc,~ \cT \in \tc}.
\end{align}
Below, we show that there is a constant relation between the diamond norm and the absolute robustness. This establishes the equivalence between these two quantities over all Hermitian-preserving maps.

\begin{proposition}\label{prop:robustness_and_diamond}
    For any Hermitian-preserving map $\cE$, it holds that
    \begin{align}
        R(\cE) = \frac{\lrV{\cE}_\diamond - 1}{2}.
    \end{align}
\end{proposition}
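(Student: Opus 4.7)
The plan is to establish the identity $R(\cE) = (\|\cE\|_\diamond - 1)/2$ via two short matching inequalities, each reducing to tools already built up in the appendix.

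For the upper bound $R(\cE) \leq (\|\cE\|_\diamond - 1)/2$, I would exploit Theorem~\ref{theorem:diamond_norm}, which supplies a twisted channel $\cT^*$ with $\cE = \|\cE\|_\diamond\, \cT^*$. The first observation is that negating every sign label $s_j$ in a twisted channel decomposition yields another valid twisted channel, so $-\cT^* \in \tc$. With $\gamma \coloneqq \|\cE\|_\diamond$, I would then propose $\cT \coloneqq -\cT^*$ and $\lambda \coloneqq (\gamma-1)/2$ as a candidate feasible point in Eq.~\eqref{appeq:robustness_def}. A one-line algebraic check, using the identity $\gamma - \lambda = 1+\lambda$, shows that $(\cE + \lambda \cT)/(1+\lambda) = \cT^* \in \tc$, witnessing feasibility and hence bounding the minimum from above by $(\gamma-1)/2$.

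For the lower bound $R(\cE) \geq (\|\cE\|_\diamond - 1)/2$, I would take an arbitrary feasible pair $(\lambda, \cT)$, write $\cT' \coloneqq (\cE + \lambda \cT)/(1+\lambda) \in \tc$, and rearrange to obtain the two-term expansion $\cE = (1+\lambda)\cT' - \lambda \cT$. Absorbing the minus sign into $\cT$ to make both coefficients nonnegative, I would then invoke Lemma~\ref{lemma:one_twisted_channel} to collapse this combination of two twisted channels into a single scaled twisted channel $\a \cT''$ with $\a = (1+\lambda) + \lambda = 1 + 2\lambda$. This yields $\gamma_\tc(\cE) \leq 1 + 2\lambda$, and Theorem~\ref{theorem:diamond_norm} upgrades this to $\|\cE\|_\diamond \leq 1 + 2\lambda$. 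Rearranging and taking the infimum over feasible $\lambda$ completes the lower bound.

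The main obstacle here is largely cosmetic rather than substantive: one needs to pin down the implicit convention $\lambda \geq 0$ in the absolute robustness so that Lemma~\ref{lemma:one_twisted_channel}'s sampling overhead evaluates to $1+2\lambda$ rather than degenerating (as it would for $\lambda \in [-1,0]$ where $|1+\lambda| + |{-\lambda}| = 1$). Once this convention is fixed, the proof amounts to the two short computations above, and its entire content is the interaction between (i) Theorem~\ref{theorem:diamond_norm}, which identifies $\gamma_\tc$ with the diamond norm, and (ii) Lemma~\ref{lemma:one_twisted_channel}, which flattens any linear combination of twisted channels into a single scaled one at the canonical $\ell_1$ cost.
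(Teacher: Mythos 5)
Your proposal is correct and follows essentially the same route as the paper: the upper bound uses the optimal twisted channel $\cT^*$ from Theorem~\ref{theorem:diamond_norm} together with $-\cT^*\in\tc$ to exhibit a feasible point with $\lambda=(\lrV{\cE}_\diamond-1)/2$, and the lower bound rearranges a feasible robustness decomposition into $\cE=(1+\lambda)\cT'-\lambda\cT$ and reads off the overhead $1+2\lambda$ via Lemma~\ref{lemma:one_twisted_channel}. Your remark about the implicit convention $\lambda\geq 0$ is a fair (and minor) point that the paper also leaves tacit.
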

\begin{proof}
    Let $\Phi$ be a twisted channel such that $\Psi \coloneqq (\cE + R(\cE)\Phi)/(1+R(\cE)) \in \tc$. Then we can write $\cE$ as $\cE = (1+R(\cE)) \Psi - R(\cE)\Phi$, which is a valid decomposition of $\cE$ into twisted channels. The sampling overhead associated with this decomposition is $1 + 2R(\cE)$. Since $\lrV{\cE}_\diamond = \gamma_\tc(\cE)$ is the minimized sampling overhead over all possible decompositions, we arrive at $\lrV{\cE}_\diamond \leq 1 + 2R(\cE)$, or, equivalently, $R(\cE) \geq (\lrV{\cE}_\diamond - 1) / 2$.

    On the other hand, suppose $\cT$ is a twisted channel that simulates $\cE$ with the optimal sampling overhead $\lrV{\cE}_\diamond$, i.e., $\cE = \lrV{\cE}_\diamond \cT$.
    Then, we can write $\cT$ as
    \begin{align}\label{appeq:feasible_robustness}
        \cT = \frac{\cE + s(-\cT)}{1 + s},
    \end{align}
    where $s \coloneqq \lrp{\lrV{\cE}_\diamond - 1} / 2$. As both $\cT$ and $-\cT$ are twisted channels, Eq.~\eqref{appeq:feasible_robustness} is a feasible solution to the minimization problem that defines the robustness in Eq.~\eqref{appeq:robustness_def}. Thus, we have $R(\cE) \leq s = (\lrV{\cE}_\diamond - 1) / 2$.

    Now we have established that $\lrp{\lrV{\cE}_\diamond - 1} / 2 \leq R(\cE) \leq \lrp{\lrV{\cE}_\diamond - 1} / 2$.
    Therefore, we conclude that $R(\cE) = (\lrV{\cE}_\diamond - 1) / 2$ for any Hermitian-preserving map $\cE$.
\end{proof}

\section{Information Recovering}\label{app:info_rec}
The task of information recovering \cite{Zhao2022} aims to extract an unbiased estimation of the expectation value of a given observable $O$ from multiple copies of a noisy state $\cN(\rho)$, where $\cN$ is a noise channel and $\rho$ is an arbitrary quantum state. In other words, the task aims to estimate the expectation value $\tr[\rho O]$ from the noisy state $\cN(\rho)$ for arbitrary $\rho$, with fixed observable $O$ and fixed noise $\cN$.
The idea used in Ref.~\cite{Zhao2022} was to find a Hermitian-preserving map $\cD$ such that
\begin{equation}\label{eq:app_info_recover}
    \cN^\dagger\circ\cD^\dagger(O) = O.
\end{equation}
Then, the target value can be estimated by applying the Hermitian-preserving map $\cD$ to the noisy state $\cN(\rho)$ and measuring the observable $O$, resulting in
\begin{equation}
    \tr[\cD\circ\cN(\rho)O] = \tr[\rho \cN^\dagger\circ\cD^\dagger(O)] = \tr[\rho O],
\end{equation}
where the second equality follows from Eq.~\eqref{eq:app_info_recover}.

Note that in this task, the goal is not to simulate a particular unphysical map, but to search all unphysical maps satisfying Eq.~\eqref{eq:app_info_recover}, which we call feasible maps, for one with the lowest sampling overhead.
In Ref.~\cite{Zhao2022}, a feasible map $\cD$ is considered to be realized by QPD, which decomposes $\cD$ into a linear combination of two CPTN maps $\cD_1$ and $\cD_2$: $\cD = c_1\cD_1 + c_2\cD_2$. The optimal sampling overhead of information recovering using QPD is given by
\begin{align}
\begin{aligned}
    \gamma^*_O(\cN) \coloneqq \min &\big\{|c_1| + |c_2| \,\big|\, \cD = c_1\cD_1 + c_2\cD_2,\\
    &\qquad\cD_1,\cD_2\in {\rm CPTN}, \; \cN^\dagger\circ\cD^\dagger(O) = O\big\}.
\end{aligned}
\end{align}
This optimal sampling overhead is defined as the lowest simulation cost of any feasible map and can be calculated by an SDP using the \Choi isomorphism.

In the present work, we consider simulating a feasible map $\cD$ by measurement-controlled post-processing in place of QPD. Accordingly, the optimal sampling overhead is given by
\begin{align}\label{appeq:tc_opt_overhead}
\begin{aligned}
    \tau^*_O(\cN) = \min \big\{\gamma_\tc \lrp{\cD} \,\big|\, &\cD\,\text{is Hermitian-preserving},\\
    &\cN^\dagger\circ\cD^\dagger(O) = O\big\},
\end{aligned}
\end{align}
which can also be computed by an SDP using the \Choi isomorphism.

In order to showcase the advantage of measurement-controlled post-processing, we compare the optimal sampling overhead of QPD and measurement-controlled post-processing under different noise channels. We fix the observable as $O=I+X+Y+Z$ and consider three single-qubit noise models: the amplitude damping noise $\cN_{\rm AD}^\epsilon(\cdot)$ with Kraus operators $\proj{0} + \sqrt{1-\epsilon}\proj{1}$ and $\sqrt{\epsilon}\ketbra{0}{1}$, the dephasing noise $\cN_{\rm deph}^\epsilon$ with Kraus operators $\sqrt{1-\epsilon/2} I$ and $\sqrt{\epsilon/2} Z$, and the depolarizing noise $\cN_{\rm depo}^\epsilon(\cdot) = (1-\epsilon)(\cdot) + \epsilon\tr[\cdot]I/2$. The parameter $\epsilon\in[0,1]$ indicates the noise level for all the three channels.
It can be seen in Fig.~\ref{fig:info_recovery} that the optimal sampling overhead of measurement-controlled post-processing is consistently lower than that of QPD.
Since the set of all feasible maps is the same regardless of what simulation method is used, Fig.~\ref{fig:info_recovery} demonstrates the sample-efficiency of measurement-controlled post-processing in simulating unphysical maps in a practical task.

Besides the numerical results on the optimal sampling overhead, we also conduct a numerical experiment to show that the number of observable measurement shots corresponding to the optimal sampling overhead is indeed sufficient to guarantee the desired accuracy.
In this experiment, we focus on the depolarizing noise $\cN_{\rm depo}^\epsilon$ and denote by $\cD^\epsilon$ the Hermitian-preserving map that achieves the optimal sampling overhead $\tau^*_O(\cN_{\rm depo}^\epsilon)$ as in Eq.~\eqref{appeq:tc_opt_overhead}. Then, according to Theorem~\ref{theorem:diamond_norm}, we know that
\begin{align}
    \tau^*_O(\cN_{\rm depo}^\epsilon) = \gamma_\tc \lrp{\cD^\epsilon} = \lrV{\cD^\epsilon}_\diamond,
\end{align}
implying that $\lrV{\cD^\epsilon}_\diamond^2 K(\delta, \varepsilon, O)$ measurement shots is sufficient to achieve an estimation within an error $\varepsilon$ with a probability no less than $1-\delta$.
To verify this, we generate a random input state $\rho$ and prepare its noisy version $\cN_{\rm depo}^\epsilon(\rho)$. Then we simulate the action of $\cD^\epsilon$ on the noisy state $\cN_{\rm depo}^\epsilon(\rho)$ and measure the resulted state with the observable $O=I+X+Y+Z$ to estimate the expectation value $\tr[\rho O]$. The numerical results are plotted in Fig.~\ref{fig:sample_to_estimate}. It is clear to see that as the number of measurement shots increases, the estimation becomes more likely to be within the error tolerance. When the number of measurement shots is $\lrV{\cD^\epsilon}_\diamond^2 K(\delta, \varepsilon, O)$, all the estimations are with in the error tolerance. That is, the success probability is $1$, which is clearly larger than $1-\delta = 0.9$.

\begin{figure}[h]
    \centering
    \includegraphics[width=0.82\columnwidth]{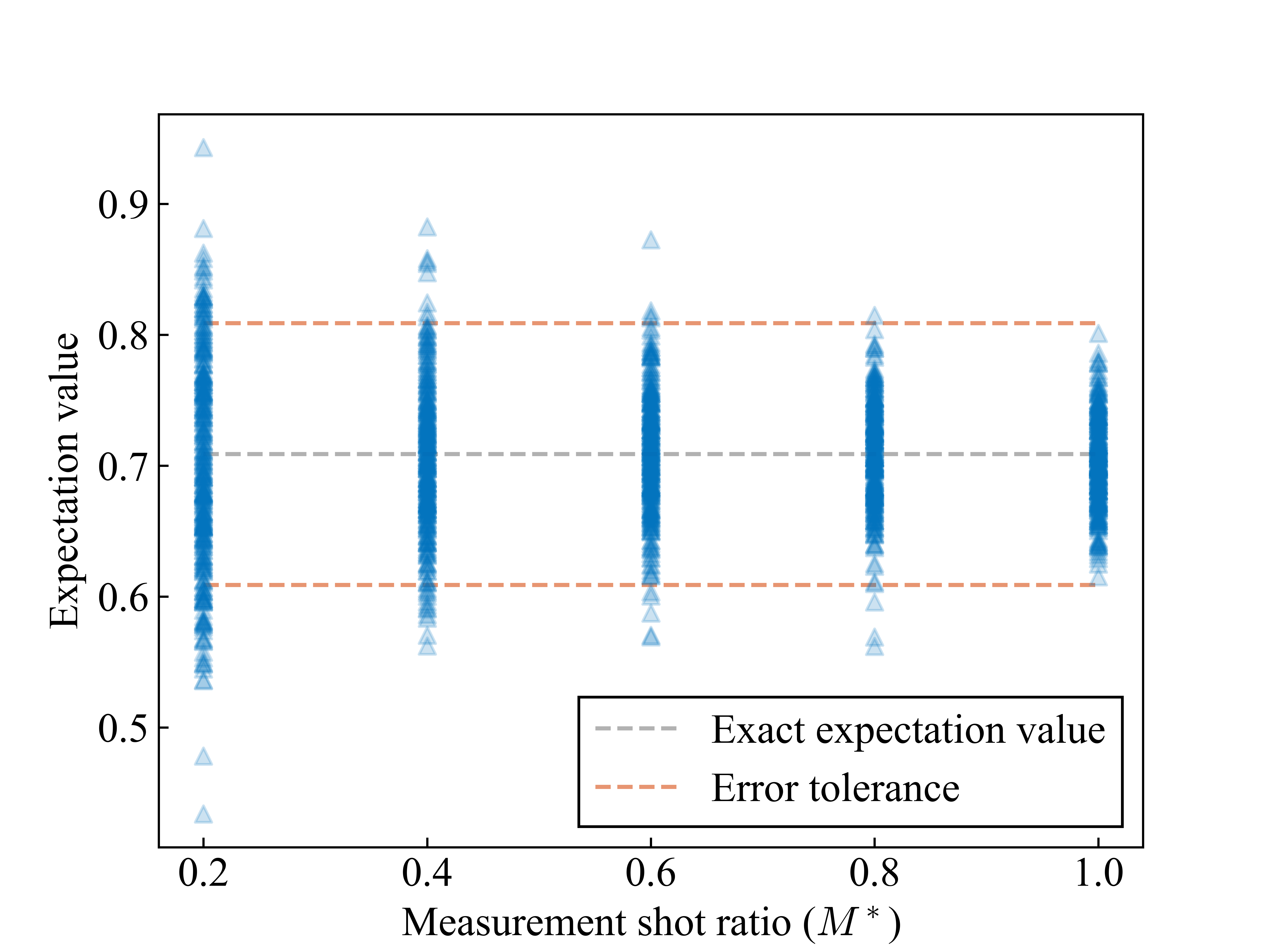}
    \caption{Estimation of the expectation value with different numbers of measurement shots. The number of measurement shots is measured in terms of its ratio to $M^* \coloneqq \lrV{\cD^\epsilon}_\diamond^2 K(\delta, \varepsilon, O)$, where we set $\epsilon=0.2$, $\delta=0.1$ and $\varepsilon=0.1$. Each blue triangle marks an estimation of the expectation value with the corresponding number of measurement shots, and for each number of measurement shots, we repeat the estimation for $300$ times. The dashed gray line marks the exact expectation value, and the space between the two dashed orange lines is the zone where an estimation is within the set accuracy.}
    \label{fig:sample_to_estimate}
\end{figure}

\color{black}
\section{Entry Extraction}\label{app:entry_ext}
Entry extraction maps extract entry-wise information of an input quantum state and mix these entries with zeros to form an output matrix for further processing. Below is the mathematical definition of an entry extraction map.
\begin{definition}\label{def:entry extraction}
    Suppose $\cI = \{ i_k \}_{k=0}^{d'-1} \subseteq [d]$ is an indexed set with increasing order. For a set of indexed pairs $A \subseteq \{ (j, k) \mid j \leq k,~ i_j, i_k \in \cI \}$, the corresponding entry extraction map $\cE$ is defined as
\begin{align}
\begin{aligned}
    \cE : \CC^{d \times d} &\to \CC^{d' \times d'} : H = \sum_{j, k = 0}^{d - 1} H_{jk} \ketbra{j}{k}_d  \\
    &\mapsto \sum_{(j, k) \in A \cup A^*} H_{i_j i_k} \ketbra{j}{k}_{d'}
\end{aligned}
,\end{align}
    where $A^* = \{ (k, j) ~|~ (j, k) \in A \}$, $\ketbra{j}{k}_d \in \CC^{d \times d}$ and  $\ketbra{j}{k}_{d'} \in \CC^{d' \times d'}$.
\end{definition}
Note that the Choi representation of $\cE$ is
$J^\cE = \sum_{j, k = 0}^{d - 1} \ketbra{j}{k}_d \otimes \cE(\ketbra{j}{k}_d) = \sum_{(j, k) \in A \cup A^*} \ketbra{i_j}{i_k}_d \otimes \ketbra{j}{k}_{d'}
$.
Then $J^\cE = \lrp{J^\cE}^\dag$ by the symmetry of $A \cup A^*$, implying that $\cE$ is Hermitian-preserving.

\begin{thebibliography}{54}%
\makeatletter
\providecommand \@ifxundefined [1]{%
 \@ifx{#1\undefined}
}%
\providecommand \@ifnum [1]{%
 \ifnum #1\expandafter \@firstoftwo
 \else \expandafter \@secondoftwo
 \fi
}%
\providecommand \@ifx [1]{%
 \ifx #1\expandafter \@firstoftwo
 \else \expandafter \@secondoftwo
 \fi
}%
\providecommand \natexlab [1]{#1}%
\providecommand \enquote  [1]{``#1''}%
\providecommand \bibnamefont  [1]{#1}%
\providecommand \bibfnamefont [1]{#1}%
\providecommand \citenamefont [1]{#1}%
\providecommand \href@noop [0]{\@secondoftwo}%
\providecommand \href [0]{\begingroup \@sanitize@url \@href}%
\providecommand \@href[1]{\@@startlink{#1}\@@href}%
\providecommand \@@href[1]{\endgroup#1\@@endlink}%
\providecommand \@sanitize@url [0]{\catcode `\\12\catcode `\$12\catcode
  `\&12\catcode `\#12\catcode `\^12\catcode `\_12\catcode `\%12\relax}%
\providecommand \@@startlink[1]{}%
\providecommand \@@endlink[0]{}%
\providecommand \url  [0]{\begingroup\@sanitize@url \@url }%
\providecommand \@url [1]{\endgroup\@href {#1}{\urlprefix }}%
\providecommand \urlprefix  [0]{URL }%
\providecommand \Eprint [0]{\href }%
\providecommand \doibase [0]{https://doi.org/}%
\providecommand \selectlanguage [0]{\@gobble}%
\providecommand \bibinfo  [0]{\@secondoftwo}%
\providecommand \bibfield  [0]{\@secondoftwo}%
\providecommand \translation [1]{[#1]}%
\providecommand \BibitemOpen [0]{}%
\providecommand \bibitemStop [0]{}%
\providecommand \bibitemNoStop [0]{.\EOS\space}%
\providecommand \EOS [0]{\spacefactor3000\relax}%
\providecommand \BibitemShut  [1]{\csname bibitem#1\endcsname}%
\let\auto@bib@innerbib\@empty
\bibitem [{\citenamefont {Lund}\ \emph {et~al.}(2017)\citenamefont {Lund},
  \citenamefont {Bremner},\ and\ \citenamefont {Ralph}}]{lund2017quantum}%
  \BibitemOpen
  \bibfield  {author} {\bibinfo {author} {\bibfnamefont {A.~P.}\ \bibnamefont
  {Lund}}, \bibinfo {author} {\bibfnamefont {M.~J.}\ \bibnamefont {Bremner}},\
  and\ \bibinfo {author} {\bibfnamefont {T.~C.}\ \bibnamefont {Ralph}},\
  }\bibfield  {title} {\bibinfo {title} {Quantum sampling problems,
  bosonsampling and quantum supremacy},\ }\href
  {https://doi.org/https://doi.org/10.1038/s41534-017-0018-2} {\bibfield
  {journal} {\bibinfo  {journal} {npj Quantum Information}\ }\textbf {\bibinfo
  {volume} {3}},\ \bibinfo {pages} {15} (\bibinfo {year} {2017})}\BibitemShut
  {NoStop}%
\bibitem [{\citenamefont {Hangleiter}\ and\ \citenamefont
  {Eisert}(2023)}]{hangleiter2023computational}%
  \BibitemOpen
  \bibfield  {author} {\bibinfo {author} {\bibfnamefont {D.}~\bibnamefont
  {Hangleiter}}\ and\ \bibinfo {author} {\bibfnamefont {J.}~\bibnamefont
  {Eisert}},\ }\bibfield  {title} {\bibinfo {title} {Computational advantage of
  quantum random sampling},\ }\href
  {https://doi.org/10.1103/RevModPhys.95.035001} {\bibfield  {journal}
  {\bibinfo  {journal} {Reviews of Modern Physics}\ }\textbf {\bibinfo {volume}
  {95}},\ \bibinfo {pages} {035001} (\bibinfo {year} {2023})}\BibitemShut
  {NoStop}%
\bibitem [{\citenamefont {Terhal}\ and\ \citenamefont
  {DiVincenzo}(2004)}]{terhal2002adaptive}%
  \BibitemOpen
  \bibfield  {author} {\bibinfo {author} {\bibfnamefont {B.~M.}\ \bibnamefont
  {Terhal}}\ and\ \bibinfo {author} {\bibfnamefont {D.~P.}\ \bibnamefont
  {DiVincenzo}},\ }\bibfield  {title} {\bibinfo {title} {Adaptive quantum
  computation, constant depth quantum circuits and arthur-merlin games},\
  }\href {https://doi.org/https://doi.org/10.26421/QIC4.2-5} {\bibfield
  {journal} {\bibinfo  {journal} {Quantum Information \& Computation}\ }\textbf
  {\bibinfo {volume} {4}},\ \bibinfo {pages} {134–145} (\bibinfo {year}
  {2004})}\BibitemShut {NoStop}%
\bibitem [{\citenamefont {Arute}\ \emph {et~al.}(2019)\citenamefont {Arute},
  \citenamefont {Arya}, \citenamefont {Babbush}, \citenamefont {Bacon},
  \citenamefont {Bardin}, \citenamefont {Barends}, \citenamefont {Biswas},
  \citenamefont {Boixo}, \citenamefont {Brandao}, \citenamefont {Buell} \emph
  {et~al.}}]{arute2019quantum}%
  \BibitemOpen
  \bibfield  {author} {\bibinfo {author} {\bibfnamefont {F.}~\bibnamefont
  {Arute}}, \bibinfo {author} {\bibfnamefont {K.}~\bibnamefont {Arya}},
  \bibinfo {author} {\bibfnamefont {R.}~\bibnamefont {Babbush}}, \bibinfo
  {author} {\bibfnamefont {D.}~\bibnamefont {Bacon}}, \bibinfo {author}
  {\bibfnamefont {J.~C.}\ \bibnamefont {Bardin}}, \bibinfo {author}
  {\bibfnamefont {R.}~\bibnamefont {Barends}}, \bibinfo {author} {\bibfnamefont
  {R.}~\bibnamefont {Biswas}}, \bibinfo {author} {\bibfnamefont
  {S.}~\bibnamefont {Boixo}}, \bibinfo {author} {\bibfnamefont {F.~G.}\
  \bibnamefont {Brandao}}, \bibinfo {author} {\bibfnamefont {D.~A.}\
  \bibnamefont {Buell}}, \emph {et~al.},\ }\bibfield  {title} {\bibinfo {title}
  {Quantum supremacy using a programmable superconducting processor},\ }\href
  {https://doi.org/https://doi.org/10.1038/s41586-019-1666-5} {\bibfield
  {journal} {\bibinfo  {journal} {Nature}\ }\textbf {\bibinfo {volume} {574}},\
  \bibinfo {pages} {505} (\bibinfo {year} {2019})}\BibitemShut {NoStop}%
\bibitem [{\citenamefont {Zhong}\ \emph {et~al.}(2020)\citenamefont {Zhong},
  \citenamefont {Wang}, \citenamefont {Deng}, \citenamefont {Chen},
  \citenamefont {Peng}, \citenamefont {Luo}, \citenamefont {Qin}, \citenamefont
  {Wu}, \citenamefont {Ding}, \citenamefont {Hu} \emph
  {et~al.}}]{zhong2020quantum}%
  \BibitemOpen
  \bibfield  {author} {\bibinfo {author} {\bibfnamefont {H.-S.}\ \bibnamefont
  {Zhong}}, \bibinfo {author} {\bibfnamefont {H.}~\bibnamefont {Wang}},
  \bibinfo {author} {\bibfnamefont {Y.-H.}\ \bibnamefont {Deng}}, \bibinfo
  {author} {\bibfnamefont {M.-C.}\ \bibnamefont {Chen}}, \bibinfo {author}
  {\bibfnamefont {L.-C.}\ \bibnamefont {Peng}}, \bibinfo {author}
  {\bibfnamefont {Y.-H.}\ \bibnamefont {Luo}}, \bibinfo {author} {\bibfnamefont
  {J.}~\bibnamefont {Qin}}, \bibinfo {author} {\bibfnamefont {D.}~\bibnamefont
  {Wu}}, \bibinfo {author} {\bibfnamefont {X.}~\bibnamefont {Ding}}, \bibinfo
  {author} {\bibfnamefont {Y.}~\bibnamefont {Hu}}, \emph {et~al.},\ }\bibfield
  {title} {\bibinfo {title} {Quantum computational advantage using photons},\
  }\href {https://doi.org/https://doi.org/10.1126/science.abe8770} {\bibfield
  {journal} {\bibinfo  {journal} {Science}\ }\textbf {\bibinfo {volume}
  {370}},\ \bibinfo {pages} {1460} (\bibinfo {year} {2020})}\BibitemShut
  {NoStop}%
\bibitem [{\citenamefont {Wu}\ \emph {et~al.}(2021)\citenamefont {Wu},
  \citenamefont {Bao}, \citenamefont {Cao}, \citenamefont {Chen}, \citenamefont
  {Chen}, \citenamefont {Chen}, \citenamefont {Chung}, \citenamefont {Deng},
  \citenamefont {Du}, \citenamefont {Fan} \emph {et~al.}}]{wu2021strong}%
  \BibitemOpen
  \bibfield  {author} {\bibinfo {author} {\bibfnamefont {Y.}~\bibnamefont
  {Wu}}, \bibinfo {author} {\bibfnamefont {W.-S.}\ \bibnamefont {Bao}},
  \bibinfo {author} {\bibfnamefont {S.}~\bibnamefont {Cao}}, \bibinfo {author}
  {\bibfnamefont {F.}~\bibnamefont {Chen}}, \bibinfo {author} {\bibfnamefont
  {M.-C.}\ \bibnamefont {Chen}}, \bibinfo {author} {\bibfnamefont
  {X.}~\bibnamefont {Chen}}, \bibinfo {author} {\bibfnamefont {T.-H.}\
  \bibnamefont {Chung}}, \bibinfo {author} {\bibfnamefont {H.}~\bibnamefont
  {Deng}}, \bibinfo {author} {\bibfnamefont {Y.}~\bibnamefont {Du}}, \bibinfo
  {author} {\bibfnamefont {D.}~\bibnamefont {Fan}}, \emph {et~al.},\ }\bibfield
   {title} {\bibinfo {title} {Strong quantum computational advantage using a
  superconducting quantum processor},\ }\href
  {https://doi.org/10.1103/PhysRevLett.127.180501} {\bibfield  {journal}
  {\bibinfo  {journal} {Physical Review Letters}\ }\textbf {\bibinfo {volume}
  {127}},\ \bibinfo {pages} {180501} (\bibinfo {year} {2021})}\BibitemShut
  {NoStop}%
\bibitem [{\citenamefont {Madsen}\ \emph {et~al.}(2022)\citenamefont {Madsen},
  \citenamefont {Laudenbach}, \citenamefont {Askarani}, \citenamefont
  {Rortais}, \citenamefont {Vincent}, \citenamefont {Bulmer}, \citenamefont
  {Miatto}, \citenamefont {Neuhaus}, \citenamefont {Helt}, \citenamefont
  {Collins} \emph {et~al.}}]{madsen2022quantum}%
  \BibitemOpen
  \bibfield  {author} {\bibinfo {author} {\bibfnamefont {L.~S.}\ \bibnamefont
  {Madsen}}, \bibinfo {author} {\bibfnamefont {F.}~\bibnamefont {Laudenbach}},
  \bibinfo {author} {\bibfnamefont {M.~F.}\ \bibnamefont {Askarani}}, \bibinfo
  {author} {\bibfnamefont {F.}~\bibnamefont {Rortais}}, \bibinfo {author}
  {\bibfnamefont {T.}~\bibnamefont {Vincent}}, \bibinfo {author} {\bibfnamefont
  {J.~F.}\ \bibnamefont {Bulmer}}, \bibinfo {author} {\bibfnamefont {F.~M.}\
  \bibnamefont {Miatto}}, \bibinfo {author} {\bibfnamefont {L.}~\bibnamefont
  {Neuhaus}}, \bibinfo {author} {\bibfnamefont {L.~G.}\ \bibnamefont {Helt}},
  \bibinfo {author} {\bibfnamefont {M.~J.}\ \bibnamefont {Collins}}, \emph
  {et~al.},\ }\bibfield  {title} {\bibinfo {title} {Quantum computational
  advantage with a programmable photonic processor},\ }\href
  {https://doi.org/https://doi.org/10.1038/s41586-022-04725-x} {\bibfield
  {journal} {\bibinfo  {journal} {Nature}\ }\textbf {\bibinfo {volume} {606}},\
  \bibinfo {pages} {75} (\bibinfo {year} {2022})}\BibitemShut {NoStop}%
\bibitem [{\citenamefont {Peres}(1996)}]{peres1996separability}%
  \BibitemOpen
  \bibfield  {author} {\bibinfo {author} {\bibfnamefont {A.}~\bibnamefont
  {Peres}},\ }\bibfield  {title} {\bibinfo {title} {Separability criterion for
  density matrices},\ }\href {https://doi.org/10.1103/PhysRevLett.77.1413}
  {\bibfield  {journal} {\bibinfo  {journal} {Physical Review Letters}\
  }\textbf {\bibinfo {volume} {77}},\ \bibinfo {pages} {1413} (\bibinfo {year}
  {1996})}\BibitemShut {NoStop}%
\bibitem [{\citenamefont {Horodecki}\ \emph {et~al.}(1996)\citenamefont
  {Horodecki}, \citenamefont {Horodecki},\ and\ \citenamefont
  {Horodecki}}]{HORODECKI19961}%
  \BibitemOpen
  \bibfield  {author} {\bibinfo {author} {\bibfnamefont {M.}~\bibnamefont
  {Horodecki}}, \bibinfo {author} {\bibfnamefont {P.}~\bibnamefont
  {Horodecki}},\ and\ \bibinfo {author} {\bibfnamefont {R.}~\bibnamefont
  {Horodecki}},\ }\bibfield  {title} {\bibinfo {title} {Separability of mixed
  states: necessary and sufficient conditions},\ }\href
  {https://doi.org/https://doi.org/10.1016/S0375-9601(96)00706-2} {\bibfield
  {journal} {\bibinfo  {journal} {Physics Letters A}\ }\textbf {\bibinfo
  {volume} {223}},\ \bibinfo {pages} {1} (\bibinfo {year} {1996})}\BibitemShut
  {NoStop}%
\bibitem [{\citenamefont {G{\"u}hne}\ and\ \citenamefont
  {T{\'o}th}(2009)}]{guhne2009entanglement}%
  \BibitemOpen
  \bibfield  {author} {\bibinfo {author} {\bibfnamefont {O.}~\bibnamefont
  {G{\"u}hne}}\ and\ \bibinfo {author} {\bibfnamefont {G.}~\bibnamefont
  {T{\'o}th}},\ }\bibfield  {title} {\bibinfo {title} {Entanglement
  detection},\ }\href {https://doi.org/10.1016/j.physrep.2009.02.004}
  {\bibfield  {journal} {\bibinfo  {journal} {Physics Reports}\ }\textbf
  {\bibinfo {volume} {474}},\ \bibinfo {pages} {1} (\bibinfo {year}
  {2009})}\BibitemShut {NoStop}%
\bibitem [{\citenamefont {Temme}\ \emph {et~al.}(2017)\citenamefont {Temme},
  \citenamefont {Bravyi},\ and\ \citenamefont {Gambetta}}]{Temme2017}%
  \BibitemOpen
  \bibfield  {author} {\bibinfo {author} {\bibfnamefont {K.}~\bibnamefont
  {Temme}}, \bibinfo {author} {\bibfnamefont {S.}~\bibnamefont {Bravyi}},\ and\
  \bibinfo {author} {\bibfnamefont {J.~M.}\ \bibnamefont {Gambetta}},\
  }\bibfield  {title} {\bibinfo {title} {{Error Mitigation for Short-Depth
  Quantum Circuits}},\ }\href {https://doi.org/10.1103/PhysRevLett.119.180509}
  {\bibfield  {journal} {\bibinfo  {journal} {Physical Review Letters}\
  }\textbf {\bibinfo {volume} {119}},\ \bibinfo {pages} {180509} (\bibinfo
  {year} {2017})}\BibitemShut {NoStop}%
\bibitem [{\citenamefont {Jiang}\ \emph {et~al.}(2021)\citenamefont {Jiang},
  \citenamefont {Wang},\ and\ \citenamefont {Wang}}]{Jiang2020}%
  \BibitemOpen
  \bibfield  {author} {\bibinfo {author} {\bibfnamefont {J.}~\bibnamefont
  {Jiang}}, \bibinfo {author} {\bibfnamefont {K.}~\bibnamefont {Wang}},\ and\
  \bibinfo {author} {\bibfnamefont {X.}~\bibnamefont {Wang}},\ }\bibfield
  {title} {\bibinfo {title} {{Physical Implementability of Linear Maps and Its
  Application in Error Mitigation}},\ }\href
  {https://doi.org/10.22331/q-2021-12-07-600} {\bibfield  {journal} {\bibinfo
  {journal} {Quantum}\ }\textbf {\bibinfo {volume} {5}},\ \bibinfo {pages}
  {600} (\bibinfo {year} {2021})}\BibitemShut {NoStop}%
\bibitem [{\citenamefont {Rossini}\ \emph {et~al.}(2023)\citenamefont
  {Rossini}, \citenamefont {Maile}, \citenamefont {Ankerhold},\ and\
  \citenamefont {Donvil}}]{rossini2023single}%
  \BibitemOpen
  \bibfield  {author} {\bibinfo {author} {\bibfnamefont {M.}~\bibnamefont
  {Rossini}}, \bibinfo {author} {\bibfnamefont {D.}~\bibnamefont {Maile}},
  \bibinfo {author} {\bibfnamefont {J.}~\bibnamefont {Ankerhold}},\ and\
  \bibinfo {author} {\bibfnamefont {B.~I.~C.}\ \bibnamefont {Donvil}},\
  }\bibfield  {title} {\bibinfo {title} {Single-qubit error mitigation by
  simulating non-markovian dynamics},\ }\href
  {https://doi.org/10.1103/PhysRevLett.131.110603} {\bibfield  {journal}
  {\bibinfo  {journal} {Physical Review Letters}\ }\textbf {\bibinfo {volume}
  {131}},\ \bibinfo {pages} {110603} (\bibinfo {year} {2023})}\BibitemShut
  {NoStop}%
\bibitem [{\citenamefont {Horodecki}\ and\ \citenamefont
  {Ekert}(2002)}]{horodecki2002method}%
  \BibitemOpen
  \bibfield  {author} {\bibinfo {author} {\bibfnamefont {P.}~\bibnamefont
  {Horodecki}}\ and\ \bibinfo {author} {\bibfnamefont {A.}~\bibnamefont
  {Ekert}},\ }\bibfield  {title} {\bibinfo {title} {Method for direct detection
  of quantum entanglement},\ }\href
  {https://doi.org/10.1103/PhysRevLett.89.127902} {\bibfield  {journal}
  {\bibinfo  {journal} {Physical Review Letters}\ }\textbf {\bibinfo {volume}
  {89}},\ \bibinfo {pages} {127902} (\bibinfo {year} {2002})}\BibitemShut
  {NoStop}%
\bibitem [{\citenamefont {Regula}\ \emph {et~al.}(2021)\citenamefont {Regula},
  \citenamefont {Takagi},\ and\ \citenamefont {Gu}}]{Regula2021a}%
  \BibitemOpen
  \bibfield  {author} {\bibinfo {author} {\bibfnamefont {B.}~\bibnamefont
  {Regula}}, \bibinfo {author} {\bibfnamefont {R.}~\bibnamefont {Takagi}},\
  and\ \bibinfo {author} {\bibfnamefont {M.}~\bibnamefont {Gu}},\ }\bibfield
  {title} {\bibinfo {title} {{Operational applications of the diamond norm and
  related measures in quantifying the non-physicality of quantum maps}},\
  }\href {https://doi.org/10.22331/q-2021-08-09-522} {\bibfield  {journal}
  {\bibinfo  {journal} {Quantum}\ }\textbf {\bibinfo {volume} {5}},\ \bibinfo
  {pages} {522} (\bibinfo {year} {2021})}\BibitemShut {NoStop}%
\bibitem [{\citenamefont {Wei}\ \emph {et~al.}(2024)\citenamefont {Wei},
  \citenamefont {Liu}, \citenamefont {Liu}, \citenamefont {Han}, \citenamefont {Deng},\ and\ \citenamefont {Liu}}]{wei2023realizing}%
  \BibitemOpen
  \bibfield  {author} {\bibinfo {author} {\bibfnamefont {F.}~\bibnamefont
  {Wei}}, \bibinfo {author} {\bibfnamefont {Z.}~\bibnamefont {Liu}}, \bibinfo
  {author} {\bibfnamefont {G.}~\bibnamefont {Liu}}, \bibinfo {author}
  {\bibfnamefont {Z.}~\bibnamefont {Han}}, \bibinfo {author} {\bibfnamefont {D.-L.}\
  \bibnamefont {Deng}},\ and\ \bibinfo {author} {\bibfnamefont
  {Z.}~\bibnamefont {Liu}},\ }\bibfield  {title} {\bibinfo {title} {Simulating non-completely positive actions via exponentiation of Hermitian-preserving maps},\
  }\href {https://doi.org/10.1038/s41534-024-00949-z} {\bibfield  {journal}
  {\bibinfo  {journal} {npj Quantum Information}\ }\textbf {\bibinfo
  {volume} {10}},\ \bibinfo {pages} {134} (\bibinfo {year}
  {2024})}\BibitemShut {NoStop}%
\bibitem [{\citenamefont {Endo}\ \emph {et~al.}(2018)\citenamefont {Endo},
  \citenamefont {Benjamin},\ and\ \citenamefont {Li}}]{endo2018practical}%
  \BibitemOpen
  \bibfield  {author} {\bibinfo {author} {\bibfnamefont {S.}~\bibnamefont
  {Endo}}, \bibinfo {author} {\bibfnamefont {S.~C.}\ \bibnamefont {Benjamin}},\
  and\ \bibinfo {author} {\bibfnamefont {Y.}~\bibnamefont {Li}},\ }\bibfield
  {title} {\bibinfo {title} {Practical quantum error mitigation for near-future
  applications},\ }\href {https://doi.org/10.1103/PhysRevX.8.031027} {\bibfield
   {journal} {\bibinfo  {journal} {Physical Review X}\ }\textbf {\bibinfo
  {volume} {8}},\ \bibinfo {pages} {031027} (\bibinfo {year}
  {2018})}\BibitemShut {NoStop}%
\bibitem [{\citenamefont {Takagi}(2021)}]{takagi2021optimal}%
  \BibitemOpen
  \bibfield  {author} {\bibinfo {author} {\bibfnamefont {R.}~\bibnamefont
  {Takagi}},\ }\bibfield  {title} {\bibinfo {title} {Optimal resource cost for
  error mitigation},\ }\href {https://doi.org/10.1103/PhysRevResearch.3.033178}
  {\bibfield  {journal} {\bibinfo  {journal} {Physical Review Research}\
  }\textbf {\bibinfo {volume} {3}},\ \bibinfo {pages} {033178} (\bibinfo {year}
  {2021})}\BibitemShut {NoStop}%
\bibitem [{\citenamefont {Piveteau}\ \emph {et~al.}(2022)\citenamefont
  {Piveteau}, \citenamefont {Sutter},\ and\ \citenamefont
  {Woerner}}]{piveteau2022quasiprobability}%
  \BibitemOpen
  \bibfield  {author} {\bibinfo {author} {\bibfnamefont {C.}~\bibnamefont
  {Piveteau}}, \bibinfo {author} {\bibfnamefont {D.}~\bibnamefont {Sutter}},\
  and\ \bibinfo {author} {\bibfnamefont {S.}~\bibnamefont {Woerner}},\
  }\bibfield  {title} {\bibinfo {title} {Quasiprobability decompositions with
  reduced sampling overhead},\ }\href
  {https://doi.org/10.1038/s41534-022-00517-3} {\bibfield  {journal} {\bibinfo
  {journal} {npj Quantum Information}\ }\textbf {\bibinfo {volume} {8}},\
  \bibinfo {pages} {12} (\bibinfo {year} {2022})}\BibitemShut {NoStop}%
\bibitem [{\citenamefont {Van Den~Berg}\ \emph {et~al.}(2023)\citenamefont {Van
  Den~Berg}, \citenamefont {Minev}, \citenamefont {Kandala},\ and\
  \citenamefont {Temme}}]{van2023probabilistic}%
  \BibitemOpen
  \bibfield  {author} {\bibinfo {author} {\bibfnamefont {E.}~\bibnamefont {Van
  Den~Berg}}, \bibinfo {author} {\bibfnamefont {Z.~K.}\ \bibnamefont {Minev}},
  \bibinfo {author} {\bibfnamefont {A.}~\bibnamefont {Kandala}},\ and\ \bibinfo
  {author} {\bibfnamefont {K.}~\bibnamefont {Temme}},\ }\bibfield  {title}
  {\bibinfo {title} {Probabilistic error cancellation with sparse
  pauli--lindblad models on noisy quantum processors},\ }\href
  {https://doi.org/10.1038/s41567-023-02042-2} {\bibfield  {journal} {\bibinfo
  {journal} {Nature Physics}\ }\textbf {\bibinfo {volume} {19}},\ \bibinfo
  {pages} {1116} (\bibinfo {year} {2023})}\BibitemShut {NoStop}%
\bibitem [{\citenamefont {Wang}\ \emph {et~al.}(2022)\citenamefont {Wang},
  \citenamefont {Song}, \citenamefont {Zhao}, \citenamefont {Wang},\ and\
  \citenamefont {Wang}}]{wang2022detecting}%
  \BibitemOpen
  \bibfield  {author} {\bibinfo {author} {\bibfnamefont {K.}~\bibnamefont
  {Wang}}, \bibinfo {author} {\bibfnamefont {Z.}~\bibnamefont {Song}}, \bibinfo
  {author} {\bibfnamefont {X.}~\bibnamefont {Zhao}}, \bibinfo {author}
  {\bibfnamefont {Z.}~\bibnamefont {Wang}},\ and\ \bibinfo {author}
  {\bibfnamefont {X.}~\bibnamefont {Wang}},\ }\bibfield  {title} {\bibinfo
  {title} {Detecting and quantifying entanglement on near-term quantum
  devices},\ }\href {https://doi.org/10.1038/s41534-022-00556-w} {\bibfield
  {journal} {\bibinfo  {journal} {npj Quantum Information}\ }\textbf {\bibinfo
  {volume} {8}},\ \bibinfo {pages} {52} (\bibinfo {year} {2022})}\BibitemShut
  {NoStop}%
\bibitem [{\citenamefont {Yuan}\ \emph {et~al.}(2024)\citenamefont {Yuan},
  \citenamefont {Regula}, \citenamefont {Takagi},\ and\ \citenamefont
  {Gu}}]{yuan2023virtual}%
  \BibitemOpen
  \bibfield  {author} {\bibinfo {author} {\bibfnamefont {X.}~\bibnamefont
  {Yuan}}, \bibinfo {author} {\bibfnamefont {B.}~\bibnamefont {Regula}},
  \bibinfo {author} {\bibfnamefont {R.}~\bibnamefont {Takagi}},\ and\ \bibinfo
  {author} {\bibfnamefont {M.}~\bibnamefont {Gu}},\ }\bibfield  {title}
  {\bibinfo {title} {Virtual quantum resource distillation},\ }\href
  {https://doi.org/10.1103/PhysRevLett.132.050203} {\bibfield  {journal}
  {\bibinfo  {journal} {Physical Review Letters}\ }\textbf {\bibinfo {volume}
  {132}},\ \bibinfo {pages} {050203} (\bibinfo {year} {2024})}\BibitemShut
  {NoStop}%
\bibitem [{\citenamefont {Buscemi}\ \emph {et~al.}(2013)\citenamefont
  {Buscemi}, \citenamefont {Dall'Arno}, \citenamefont {Ozawa},\ and\
  \citenamefont {Vedral}}]{buscemi2013direct}%
  \BibitemOpen
  \bibfield  {author} {\bibinfo {author} {\bibfnamefont {F.}~\bibnamefont
  {Buscemi}}, \bibinfo {author} {\bibfnamefont {M.}~\bibnamefont {Dall'Arno}},
  \bibinfo {author} {\bibfnamefont {M.}~\bibnamefont {Ozawa}},\ and\ \bibinfo
  {author} {\bibfnamefont {V.}~\bibnamefont {Vedral}},\ }\bibfield  {title}
  {\bibinfo {title} {Direct observation of any two-point quantum correlation
  function},\ }\href {https://doi.org/10.48550/arXiv.1312.4240} {\bibfield
  {journal} {\bibinfo  {journal} {arXiv:1312.4240}\ } (\bibinfo {year}
  {2013})}\BibitemShut {NoStop}%
\bibitem [{Note1()}]{Note1}%
  \BibitemOpen
  \bibinfo {note} {As a sampled operation ${\protect \cal N}$ could be a CPTN
  map, it acting on the input state $\rho $ should give out a normalized state
  ${\protect \cal N}(\rho ) / \protect \operatorname {Tr}[{\protect \cal
  N}(\rho )]$. However, a CPTN map is implemented by measurement and
  post-selection, and $\protect \operatorname {Tr}[{\protect \cal N}(\rho )]$
  is exactly the success probability of implementing ${\protect \cal N}$. This
  probability cancels out the normalization factor in the output state. For
  this reason, when analyzing QPD, we can ignore the normalization factor
  $\protect \operatorname {Tr}[{\protect \cal N}(\rho )]$ as in Fig.~\ref
  {fig:model_comparison}(a).}\BibitemShut {Stop}%
\bibitem [{\citenamefont {Hoeffding}(1994)}]{hoeffding1994probability}%
  \BibitemOpen
  \bibfield  {author} {\bibinfo {author} {\bibfnamefont {W.}~\bibnamefont
  {Hoeffding}},\ }\bibfield  {title} {\bibinfo {title} {Probability
  inequalities for sums of bounded random variables},\ }in\ \href
  {https://doi.org/10.1007/978-1-4612-0865-5_26} {\emph {\bibinfo {booktitle}
  {The Collected Works of Wassily Hoeffding}}}\ (\bibinfo  {publisher}
  {Springer},\ \bibinfo {year} {1994})\ pp.\ \bibinfo {pages}
  {409--426}\BibitemShut {NoStop}%
\bibitem [{\citenamefont {Khatri}\ and\ \citenamefont
  {Wilde}(2020)}]{khatri2020principles}%
  \BibitemOpen
  \bibfield  {author} {\bibinfo {author} {\bibfnamefont {S.}~\bibnamefont
  {Khatri}}\ and\ \bibinfo {author} {\bibfnamefont {M.~M.}\ \bibnamefont
  {Wilde}},\ }\bibfield  {title} {\bibinfo {title} {Principles of quantum
  communication theory: A modern approach},\ }\href
  {https://doi.org/10.48550/arXiv.2011.04672} {\bibfield  {journal} {\bibinfo
  {journal} {arXiv:2011.04672}\ } (\bibinfo {year} {2020})}\BibitemShut
  {NoStop}%
\bibitem [{Note2()}]{Note2}%
  \BibitemOpen
  \bibinfo {note} {A set of operations similar to twisted channels has been
  noted in Ref.~\cite {piveteau2023circuit} inspired by Refs.~\cite
  {Mitarai2021, Mitarai2021a} for the task of circuit knitting. This set is
  introduced as a generalization of the set of CPTN maps to be sampled under
  the conventional QPD framework. In this work, we are taking an inherently
  different approach by involving quantum measurement in the process of
  generating probability distributions and sampling from them, where a twisted
  channel naturally arises.}\BibitemShut {Stop}%
\bibitem [{\citenamefont {Kitaev}(1997)}]{kitaev1997quantum}%
  \BibitemOpen
  \bibfield  {author} {\bibinfo {author} {\bibfnamefont {A.~Y.}\ \bibnamefont
  {Kitaev}},\ }\bibfield  {title} {\bibinfo {title} {Quantum computations:
  algorithms and error correction},\ }\href
  {https://doi.org/10.1070/RM1997v052n06ABEH002155} {\bibfield  {journal}
  {\bibinfo  {journal} {Russian Mathematical Surveys}\ }\textbf {\bibinfo
  {volume} {52}},\ \bibinfo {pages} {1191} (\bibinfo {year}
  {1997})}\BibitemShut {NoStop}%
\bibitem [{\citenamefont {Watrous}(2018)}]{watrous2018theory}%
  \BibitemOpen
  \bibfield  {author} {\bibinfo {author} {\bibfnamefont {J.}~\bibnamefont
  {Watrous}},\ }\href {https://doi.org/10.1017/9781316848142} {\emph {\bibinfo
  {title} {The Theory of Quantum Information}}}\ (\bibinfo  {publisher}
  {Cambridge University Press},\ \bibinfo {year} {2018})\BibitemShut {NoStop}%
\bibitem [{\citenamefont {Rosgen}\ and\ \citenamefont
  {Watrous}(2005)}]{rosgen2005hardness}%
  \BibitemOpen
  \bibfield  {author} {\bibinfo {author} {\bibfnamefont {B.}~\bibnamefont
  {Rosgen}}\ and\ \bibinfo {author} {\bibfnamefont {J.}~\bibnamefont
  {Watrous}},\ }\bibfield  {title} {\bibinfo {title} {On the hardness of
  distinguishing mixed-state quantum computations},\ }in\ \href
  {https://doi.org/10.1109/CCC.2005.21} {\emph {\bibinfo {booktitle} {20th
  Annual IEEE Conference on Computational Complexity (CCC'05)}}}\ (\bibinfo
  {year} {2005})\ pp.\ \bibinfo {pages} {344--354}\BibitemShut {NoStop}%
\bibitem [{\citenamefont {Gilchrist}\ \emph {et~al.}(2005)\citenamefont
  {Gilchrist}, \citenamefont {Langford},\ and\ \citenamefont
  {Nielsen}}]{gilchrist2005distance}%
  \BibitemOpen
  \bibfield  {author} {\bibinfo {author} {\bibfnamefont {A.}~\bibnamefont
  {Gilchrist}}, \bibinfo {author} {\bibfnamefont {N.~K.}\ \bibnamefont
  {Langford}},\ and\ \bibinfo {author} {\bibfnamefont {M.~A.}\ \bibnamefont
  {Nielsen}},\ }\bibfield  {title} {\bibinfo {title} {Distance measures to
  compare real and ideal quantum processes},\ }\href
  {https://doi.org/10.1103/PhysRevA.71.062310} {\bibfield  {journal} {\bibinfo
  {journal} {Physical Review A}\ }\textbf {\bibinfo {volume} {71}},\ \bibinfo
  {pages} {062310} (\bibinfo {year} {2005})}\BibitemShut {NoStop}%
\bibitem [{\citenamefont {Guo}\ and\ \citenamefont
  {Yang}(2023)}]{guo2023noise}%
  \BibitemOpen
  \bibfield  {author} {\bibinfo {author} {\bibfnamefont {Y.}~\bibnamefont
  {Guo}}\ and\ \bibinfo {author} {\bibfnamefont {S.}~\bibnamefont {Yang}},\
  }\bibfield  {title} {\bibinfo {title} {Noise effects on purity and quantum
  entanglement in terms of physical implementability},\ }\href
  {https://doi.org/10.1038/s41534-023-00680-1} {\bibfield  {journal} {\bibinfo
  {journal} {npj Quantum Information}\ }\textbf {\bibinfo {volume} {9}},\
  \bibinfo {pages} {11} (\bibinfo {year} {2023})}\BibitemShut {NoStop}%
\bibitem [{\citenamefont {Chitambar}\ and\ \citenamefont
  {Gour}(2019)}]{chitambar2019quantum}%
  \BibitemOpen
  \bibfield  {author} {\bibinfo {author} {\bibfnamefont {E.}~\bibnamefont
  {Chitambar}}\ and\ \bibinfo {author} {\bibfnamefont {G.}~\bibnamefont
  {Gour}},\ }\bibfield  {title} {\bibinfo {title} {Quantum resource theories},\
  }\href {https://doi.org/10.1103/RevModPhys.91.025001} {\bibfield  {journal}
  {\bibinfo  {journal} {Reviews of Modern Physics}\ }\textbf {\bibinfo {volume}
  {91}},\ \bibinfo {pages} {025001} (\bibinfo {year} {2019})}\BibitemShut
  {NoStop}%
\bibitem [{\citenamefont {Vidal}\ and\ \citenamefont
  {Tarrach}(1999)}]{vidal1999robustness}%
  \BibitemOpen
  \bibfield  {author} {\bibinfo {author} {\bibfnamefont {G.}~\bibnamefont
  {Vidal}}\ and\ \bibinfo {author} {\bibfnamefont {R.}~\bibnamefont
  {Tarrach}},\ }\bibfield  {title} {\bibinfo {title} {Robustness of
  entanglement},\ }\href {https://doi.org/10.1103/PhysRevA.59.141} {\bibfield
  {journal} {\bibinfo  {journal} {Physical Review A}\ }\textbf {\bibinfo
  {volume} {59}},\ \bibinfo {pages} {141} (\bibinfo {year} {1999})}\BibitemShut
  {NoStop}%
\bibitem [{\citenamefont {Harrow}\ and\ \citenamefont
  {Nielsen}(2003)}]{harrow2003robustness}%
  \BibitemOpen
  \bibfield  {author} {\bibinfo {author} {\bibfnamefont {A.~W.}\ \bibnamefont
  {Harrow}}\ and\ \bibinfo {author} {\bibfnamefont {M.~A.}\ \bibnamefont
  {Nielsen}},\ }\bibfield  {title} {\bibinfo {title} {Robustness of quantum
  gates in the presence of noise},\ }\href
  {https://doi.org/10.1103/PhysRevA.68.012308} {\bibfield  {journal} {\bibinfo
  {journal} {Physical Review A}\ }\textbf {\bibinfo {volume} {68}},\ \bibinfo
  {pages} {012308} (\bibinfo {year} {2003})}\BibitemShut {NoStop}%
\bibitem [{\citenamefont {Steiner}(2003)}]{steiner2003generalized}%
  \BibitemOpen
  \bibfield  {author} {\bibinfo {author} {\bibfnamefont {M.}~\bibnamefont
  {Steiner}},\ }\bibfield  {title} {\bibinfo {title} {Generalized robustness of
  entanglement},\ }\href {https://doi.org/10.1103/PhysRevA.67.054305}
  {\bibfield  {journal} {\bibinfo  {journal} {Physical Review A}\ }\textbf
  {\bibinfo {volume} {67}},\ \bibinfo {pages} {054305} (\bibinfo {year}
  {2003})}\BibitemShut {NoStop}%
\bibitem [{\citenamefont {Brand\~ao}(2007)}]{brandao2007entanglement}%
  \BibitemOpen
  \bibfield  {author} {\bibinfo {author} {\bibfnamefont {F.~G. S.~L.}\
  \bibnamefont {Brand\~ao}},\ }\bibfield  {title} {\bibinfo {title}
  {Entanglement activation and the robustness of quantum correlations},\ }\href
  {https://doi.org/10.1103/PhysRevA.76.030301} {\bibfield  {journal} {\bibinfo
  {journal} {Physical Review A}\ }\textbf {\bibinfo {volume} {76}},\ \bibinfo
  {pages} {030301} (\bibinfo {year} {2007})}\BibitemShut {NoStop}%
\bibitem [{\citenamefont {Almeida}\ \emph {et~al.}(2007)\citenamefont
  {Almeida}, \citenamefont {Pironio}, \citenamefont {Barrett}, \citenamefont
  {T\'oth},\ and\ \citenamefont {Ac\'{\i}n}}]{almeida2007noise}%
  \BibitemOpen
  \bibfield  {author} {\bibinfo {author} {\bibfnamefont {M.~L.}\ \bibnamefont
  {Almeida}}, \bibinfo {author} {\bibfnamefont {S.}~\bibnamefont {Pironio}},
  \bibinfo {author} {\bibfnamefont {J.}~\bibnamefont {Barrett}}, \bibinfo
  {author} {\bibfnamefont {G.}~\bibnamefont {T\'oth}},\ and\ \bibinfo {author}
  {\bibfnamefont {A.}~\bibnamefont {Ac\'{\i}n}},\ }\bibfield  {title} {\bibinfo
  {title} {Noise robustness of the nonlocality of entangled quantum states},\
  }\href {https://doi.org/10.1103/PhysRevLett.99.040403} {\bibfield  {journal}
  {\bibinfo  {journal} {Physical Review Letters}\ }\textbf {\bibinfo {volume}
  {99}},\ \bibinfo {pages} {040403} (\bibinfo {year} {2007})}\BibitemShut
  {NoStop}%
\bibitem [{\citenamefont {Napoli}\ \emph {et~al.}(2016)\citenamefont {Napoli},
  \citenamefont {Bromley}, \citenamefont {Cianciaruso}, \citenamefont {Piani},
  \citenamefont {Johnston},\ and\ \citenamefont
  {Adesso}}]{napoli2016robustness}%
  \BibitemOpen
  \bibfield  {author} {\bibinfo {author} {\bibfnamefont {C.}~\bibnamefont
  {Napoli}}, \bibinfo {author} {\bibfnamefont {T.~R.}\ \bibnamefont {Bromley}},
  \bibinfo {author} {\bibfnamefont {M.}~\bibnamefont {Cianciaruso}}, \bibinfo
  {author} {\bibfnamefont {M.}~\bibnamefont {Piani}}, \bibinfo {author}
  {\bibfnamefont {N.}~\bibnamefont {Johnston}},\ and\ \bibinfo {author}
  {\bibfnamefont {G.}~\bibnamefont {Adesso}},\ }\bibfield  {title} {\bibinfo
  {title} {Robustness of coherence: An operational and observable measure of
  quantum coherence},\ }\href {https://doi.org/10.1103/PhysRevLett.116.150502}
  {\bibfield  {journal} {\bibinfo  {journal} {Physical Review Letters}\
  }\textbf {\bibinfo {volume} {116}},\ \bibinfo {pages} {150502} (\bibinfo
  {year} {2016})}\BibitemShut {NoStop}%
\bibitem [{\citenamefont {Skrzypczyk}\ and\ \citenamefont
  {Linden}(2019)}]{skrzypczyk2019robustness}%
  \BibitemOpen
  \bibfield  {author} {\bibinfo {author} {\bibfnamefont {P.}~\bibnamefont
  {Skrzypczyk}}\ and\ \bibinfo {author} {\bibfnamefont {N.}~\bibnamefont
  {Linden}},\ }\bibfield  {title} {\bibinfo {title} {Robustness of measurement,
  discrimination games, and accessible information},\ }\href
  {https://doi.org/10.1103/PhysRevLett.122.140403} {\bibfield  {journal}
  {\bibinfo  {journal} {Physical Review Letters}\ }\textbf {\bibinfo {volume}
  {122}},\ \bibinfo {pages} {140403} (\bibinfo {year} {2019})}\BibitemShut
  {NoStop}%
\bibitem [{\citenamefont {Zhao}\ \emph {et~al.}(2023)\citenamefont {Zhao},
  \citenamefont {Zhao}, \citenamefont {Xia},\ and\ \citenamefont
  {Wang}}]{Zhao2022}%
  \BibitemOpen
  \bibfield  {author} {\bibinfo {author} {\bibfnamefont {X.}~\bibnamefont
  {Zhao}}, \bibinfo {author} {\bibfnamefont {B.}~\bibnamefont {Zhao}}, \bibinfo
  {author} {\bibfnamefont {Z.}~\bibnamefont {Xia}},\ and\ \bibinfo {author}
  {\bibfnamefont {X.}~\bibnamefont {Wang}},\ }\bibfield  {title} {\bibinfo
  {title} {{Information recoverability of noisy quantum states}},\ }\href
  {https://doi.org/10.22331/q-2023-04-13-978} {\bibfield  {journal} {\bibinfo
  {journal} {Quantum}\ }\textbf {\bibinfo {volume} {7}},\ \bibinfo {pages}
  {978} (\bibinfo {year} {2023})}\BibitemShut {NoStop}%
\bibitem [{\citenamefont {Cong}\ \emph {et~al.}(2019)\citenamefont {Cong},
  \citenamefont {Choi},\ and\ \citenamefont {Lukin}}]{cong2019quantum}%
  \BibitemOpen
  \bibfield  {author} {\bibinfo {author} {\bibfnamefont {I.}~\bibnamefont
  {Cong}}, \bibinfo {author} {\bibfnamefont {S.}~\bibnamefont {Choi}},\ and\
  \bibinfo {author} {\bibfnamefont {M.~D.}\ \bibnamefont {Lukin}},\ }\bibfield
  {title} {\bibinfo {title} {Quantum convolutional neural networks},\ }\href
  {https://doi.org/10.1038/s41567-019-0648-8} {\bibfield  {journal} {\bibinfo
  {journal} {Nature Physics}\ }\textbf {\bibinfo {volume} {15}},\ \bibinfo
  {pages} {1273} (\bibinfo {year} {2019})}\BibitemShut {NoStop}%
\bibitem [{\citenamefont {Mitarai}\ and\ \citenamefont
  {Fujii}(2021{\natexlab{a}})}]{Mitarai2021a}%
  \BibitemOpen
  \bibfield  {author} {\bibinfo {author} {\bibfnamefont {K.}~\bibnamefont
  {Mitarai}}\ and\ \bibinfo {author} {\bibfnamefont {K.}~\bibnamefont
  {Fujii}},\ }\bibfield  {title} {\bibinfo {title} {{Overhead for simulating a
  non-local channel with local channels by quasiprobability sampling}},\ }\href
  {https://doi.org/10.22331/q-2021-01-28-388} {\bibfield  {journal} {\bibinfo
  {journal} {Quantum}\ }\textbf {\bibinfo {volume} {5}},\ \bibinfo {pages}
  {388} (\bibinfo {year} {2021}{\natexlab{a}})}\BibitemShut {NoStop}%
\bibitem [{\citenamefont {Piveteau}\ and\ \citenamefont
  {Sutter}(2024)}]{piveteau2023circuit}%
  \BibitemOpen
  \bibfield  {author} {\bibinfo {author} {\bibfnamefont {C.}~\bibnamefont
  {Piveteau}}\ and\ \bibinfo {author} {\bibfnamefont {D.}~\bibnamefont
  {Sutter}},\ }\bibfield  {title} {\bibinfo {title} {Circuit knitting with
  classical communication},\ }\href {https://doi.org/10.1109/TIT.2023.3310797}
  {\bibfield  {journal} {\bibinfo  {journal} {IEEE Transactions on Information
  Theory}\ }\textbf {\bibinfo {volume} {70}},\ \bibinfo {pages} {2734}
  (\bibinfo {year} {2024})}\BibitemShut {NoStop}%
\bibitem [{\citenamefont {Jing}\ \emph {et~al.}(2025)\citenamefont {Jing},
  \citenamefont {Zhu},\ and\ \citenamefont {Wang}}]{jing2025circuit}%
  \BibitemOpen
  \bibfield  {author} {\bibinfo {author} {\bibfnamefont {M.}~\bibnamefont
  {Jing}}, \bibinfo {author} {\bibfnamefont {C.}~\bibnamefont {Zhu}},\ and\
  \bibinfo {author} {\bibfnamefont {X.}~\bibnamefont {Wang}},\ }\bibfield
  {title} {\bibinfo {title} {{Circuit knitting facing exponential sampling-overhead scaling bounded by entanglement cost}},\ }\href
  {https://doi.org/10.1103/PhysRevA.111.012433} {\bibfield  {journal} {\bibinfo
  {journal} {Physical Review A}\ }\textbf {\bibinfo {volume} {111}},\ \bibinfo {pages}
  {012433} (\bibinfo {year} {2025})}\BibitemShut {NoStop}%
\bibitem [{\citenamefont {Campbell}(2019)}]{campbell2019random}%
  \BibitemOpen
  \bibfield  {author} {\bibinfo {author} {\bibfnamefont {E.}~\bibnamefont
  {Campbell}},\ }\bibfield  {title} {\bibinfo {title} {Random compiler for fast
  hamiltonian simulation},\ }\href
  {https://doi.org/10.1103/PhysRevLett.123.070503} {\bibfield  {journal}
  {\bibinfo  {journal} {Physical Review Letters}\ }\textbf {\bibinfo {volume}
  {123}},\ \bibinfo {pages} {070503} (\bibinfo {year} {2019})}\BibitemShut
  {NoStop}%
\bibitem [{\citenamefont {Faehrmann}\ \emph {et~al.}(2022)\citenamefont
  {Faehrmann}, \citenamefont {Steudtner}, \citenamefont {Kueng}, \citenamefont
  {Kieferova},\ and\ \citenamefont {Eisert}}]{faehrmann2022randomizing}%
  \BibitemOpen
  \bibfield  {author} {\bibinfo {author} {\bibfnamefont {P.~K.}\ \bibnamefont
  {Faehrmann}}, \bibinfo {author} {\bibfnamefont {M.}~\bibnamefont
  {Steudtner}}, \bibinfo {author} {\bibfnamefont {R.}~\bibnamefont {Kueng}},
  \bibinfo {author} {\bibfnamefont {M.}~\bibnamefont {Kieferova}},\ and\
  \bibinfo {author} {\bibfnamefont {J.}~\bibnamefont {Eisert}},\ }\bibfield
  {title} {\bibinfo {title} {Randomizing multi-product formulas for hamiltonian
  simulation},\ }\href {https://doi.org/10.22331/q-2022-09-19-806} {\bibfield
  {journal} {\bibinfo  {journal} {Quantum}\ }\textbf {\bibinfo {volume} {6}},\
  \bibinfo {pages} {806} (\bibinfo {year} {2022})}\BibitemShut {NoStop}%
\bibitem [{\citenamefont {Kiss}\ \emph {et~al.}(2023)\citenamefont {Kiss},
  \citenamefont {Grossi},\ and\ \citenamefont {Roggero}}]{kiss2023importance}%
  \BibitemOpen
  \bibfield  {author} {\bibinfo {author} {\bibfnamefont {O.}~\bibnamefont
  {Kiss}}, \bibinfo {author} {\bibfnamefont {M.}~\bibnamefont {Grossi}},\ and\
  \bibinfo {author} {\bibfnamefont {A.}~\bibnamefont {Roggero}},\ }\bibfield
  {title} {\bibinfo {title} {Importance sampling for stochastic quantum
  simulations},\ }\href {https://doi.org/10.22331/q-2023-04-13-977} {\bibfield
  {journal} {\bibinfo  {journal} {{Quantum}}\ }\textbf {\bibinfo {volume}
  {7}},\ \bibinfo {pages} {977} (\bibinfo {year} {2023})}\BibitemShut {NoStop}%
\bibitem [{\citenamefont {Piveteau}\ \emph {et~al.}(2021)\citenamefont
  {Piveteau}, \citenamefont {Sutter}, \citenamefont {Bravyi}, \citenamefont
  {Gambetta},\ and\ \citenamefont {Temme}}]{piveteau2021error}%
  \BibitemOpen
  \bibfield  {author} {\bibinfo {author} {\bibfnamefont {C.}~\bibnamefont
  {Piveteau}}, \bibinfo {author} {\bibfnamefont {D.}~\bibnamefont {Sutter}},
  \bibinfo {author} {\bibfnamefont {S.}~\bibnamefont {Bravyi}}, \bibinfo
  {author} {\bibfnamefont {J.~M.}\ \bibnamefont {Gambetta}},\ and\ \bibinfo
  {author} {\bibfnamefont {K.}~\bibnamefont {Temme}},\ }\bibfield  {title}
  {\bibinfo {title} {Error mitigation for universal gates on encoded qubits},\
  }\href {https://doi.org/10.1103/PhysRevLett.127.200505} {\bibfield  {journal}
  {\bibinfo  {journal} {Physical Review Letters}\ }\textbf {\bibinfo {volume}
  {127}},\ \bibinfo {pages} {200505} (\bibinfo {year} {2021})}\BibitemShut
  {NoStop}%
\bibitem [{\citenamefont {Lostaglio}\ and\ \citenamefont
  {Ciani}(2021)}]{lostaglio2021error}%
  \BibitemOpen
  \bibfield  {author} {\bibinfo {author} {\bibfnamefont {M.}~\bibnamefont
  {Lostaglio}}\ and\ \bibinfo {author} {\bibfnamefont {A.}~\bibnamefont
  {Ciani}},\ }\bibfield  {title} {\bibinfo {title} {Error mitigation and
  quantum-assisted simulation in the error corrected regime},\ }\href
  {https://doi.org/10.1103/PhysRevLett.127.200506} {\bibfield  {journal}
  {\bibinfo  {journal} {Physical Review Letters}\ }\textbf {\bibinfo {volume}
  {127}},\ \bibinfo {pages} {200506} (\bibinfo {year} {2021})}\BibitemShut
  {NoStop}%
\bibitem [{\citenamefont {Suzuki}\ \emph {et~al.}(2022)\citenamefont {Suzuki},
  \citenamefont {Endo}, \citenamefont {Fujii},\ and\ \citenamefont
  {Tokunaga}}]{suzuki2022quantum}%
  \BibitemOpen
  \bibfield  {author} {\bibinfo {author} {\bibfnamefont {Y.}~\bibnamefont
  {Suzuki}}, \bibinfo {author} {\bibfnamefont {S.}~\bibnamefont {Endo}},
  \bibinfo {author} {\bibfnamefont {K.}~\bibnamefont {Fujii}},\ and\ \bibinfo
  {author} {\bibfnamefont {Y.}~\bibnamefont {Tokunaga}},\ }\bibfield  {title}
  {\bibinfo {title} {Quantum error mitigation as a universal error reduction
  technique: Applications from the nisq to the fault-tolerant quantum computing
  eras},\ }\href {https://doi.org/10.1103/PRXQuantum.3.010345} {\bibfield
  {journal} {\bibinfo  {journal} {PRX Quantum}\ }\textbf {\bibinfo {volume}
  {3}},\ \bibinfo {pages} {010345} (\bibinfo {year} {2022})}\BibitemShut
  {NoStop}%
\bibitem [{\citenamefont {Mitarai}\ and\ \citenamefont
  {Fujii}(2021{\natexlab{b}})}]{Mitarai2021}%
  \BibitemOpen
  \bibfield  {author} {\bibinfo {author} {\bibfnamefont {K.}~\bibnamefont
  {Mitarai}}\ and\ \bibinfo {author} {\bibfnamefont {K.}~\bibnamefont
  {Fujii}},\ }\bibfield  {title} {\bibinfo {title} {{Constructing a virtual
  two-qubit gate by sampling single-qubit operations}},\ }\href
  {https://doi.org/10.1088/1367-2630/abd7bc} {\bibfield  {journal} {\bibinfo
  {journal} {New Journal of Physics}\ }\textbf {\bibinfo {volume} {23}},\
  \bibinfo {pages} {023021} (\bibinfo {year} {2021}{\natexlab{b}})}\BibitemShut
  {NoStop}%
\bibitem [{\citenamefont {Jamio{\l}kowski}(1972)}]{jamiolkowski1972linear}%
  \BibitemOpen
  \bibfield  {author} {\bibinfo {author} {\bibfnamefont {A.}~\bibnamefont
  {Jamio{\l}kowski}},\ }\bibfield  {title} {\bibinfo {title} {Linear
  transformations which preserve trace and positive semidefiniteness of
  operators},\ }\href {https://doi.org/10.1016/0034-4877(72)90011-0} {\bibfield
   {journal} {\bibinfo  {journal} {Reports on Mathematical Physics}\ }\textbf
  {\bibinfo {volume} {3}},\ \bibinfo {pages} {275} (\bibinfo {year}
  {1972})}\BibitemShut {NoStop}%
\bibitem [{\citenamefont {Choi}(1975)}]{choi1975completely}%
  \BibitemOpen
  \bibfield  {author} {\bibinfo {author} {\bibfnamefont {M.-D.}\ \bibnamefont
  {Choi}},\ }\bibfield  {title} {\bibinfo {title} {Completely positive linear
  maps on complex matrices},\ }\href
  {https://doi.org/10.1016/0024-3795(75)90075-0} {\bibfield  {journal}
  {\bibinfo  {journal} {Linear Algebra and Its Applications}\ }\textbf
  {\bibinfo {volume} {10}},\ \bibinfo {pages} {285} (\bibinfo {year}
  {1975})}\BibitemShut {NoStop}%
\bibitem [{\citenamefont {Aubrun}\ and\ \citenamefont
  {Szarek}(2017)}]{aubrun2017alice}%
  \BibitemOpen
  \bibfield  {author} {\bibinfo {author} {\bibfnamefont {G.}~\bibnamefont
  {Aubrun}}\ and\ \bibinfo {author} {\bibfnamefont {S.~J.}\ \bibnamefont
  {Szarek}},\ }\href {https://doi.org/10.1090/surv/223} {\emph {\bibinfo
  {title} {Alice and Bob meet Banach}}},\ Vol.\ \bibinfo {volume} {223}\
  (\bibinfo  {publisher} {American Mathematical Society},\ \bibinfo {year}
  {2017})\BibitemShut {NoStop}%
\end{thebibliography}
\end{document}